\documentclass[twoside,a4paper]{article}
\usepackage{amssymb}
\usepackage{color}
\usepackage{graphicx}
\usepackage{subfigure}
\usepackage{cite}

\usepackage[english]{babel}

\usepackage{amsmath}
\usepackage{amsthm}
\usepackage{bbm}

\newtheorem{definition}{Definition}[section]
\newtheorem{proposition}{Proposition}[section]

\newtheorem{remark}{Remark}[section]
\newtheorem{corollary}{Corollary}[section]

\textheight=8.5truein
\textwidth=6.5truein
\overfullrule=0pt
\parskip=2pt
\parindent=12pt
\headheight=0.5in
\headsep=0in
\topmargin=0in
\evensidemargin=0in
\oddsidemargin=0in

\begin{document}

\title{Performances and robustness of quantum teleportation with identical particles}
\author{Ugo Marzolino$^{1,2}$ and Andreas Buchleitner$^2$\\
\\
\small 
$^1$Univerza v Ljubljani, Jadranska 19, SI-1000 Ljubljana, Slovenija \\
\small
$^2$Albert-Ludwigs-Universit\"at Freiburg, Hermann-Herder-Stra\ss e 3, 79104 Freiburg, Deutschland
}
\date{\null}

\maketitle

\begin{abstract}
When quantum teleportation is performed with truly identical massive particles, indistinguishability allows us to teleport addressable degrees of freedom which do not identify particles, but e.g. orthogonal modes. The key resource of the protocol is a state of entangled modes, but the conservation of the total number of particles does not allow for perfect deterministic teleportation unless the number of particles in the resource state goes to infinity. Here, we study the convergence of teleportation performances in the above limit, and provide sufficient conditions for asymptotic perfect teleporation. We also apply these conditions to the case of resource states affected by noise.
\end{abstract}

\section{Introduction}

Information theory has been successfully extended to the quantum domain, where information processing is implemented with systems ruled by the laws of quantum mechanics \cite{NielsenChuang,Horodecci,Buhrman2010}. Most theoretical work focuses on distinguishable particles, namely particles that can be unambiguously identified at any moment. However, quantum mechanics predicts that identical particles cannot be in general distinguished \cite{Landau,Messiah}. On the other hand, many experiments and proposals actually employ identical particles \cite{Ionicioiu2002,Riebe2004,Barrett2004,Negretti2011,Rohling2012,
Underwood2012,Inaba2014,Byrnes2015}, implementing protocols derived for distinguishable particles. The usual method is to fix some degrees of freedom, e.g. position values, to unambiguously label and characterize each particle \cite{Herbut2001,Herbut2006,Tichy2013}. These degrees of freedom cannot be further manipulated, whereas additional degrees of freedom must carry the relevant information to be processed.

Nevertheless, if one aims to build an integrated architecture, it can be required to exploit all the manipulable degrees of freedom, in order to encode the desired processing, with no way to label and distinguish particles. For instance, if identical particles are distinguished and labelled by means of spatial localization, only the other degrees of freedom, such as spin, polarization or hyperfine levels, are accessible within the framework of distinguishable particles. It is however hard to localize particles in small systems, an the exploitation of all the degrees of freedom, including the spatial one, is convenient to increase the computing power of a device without increasing its size.
For these reasons, we start a careful analysis of quantum information protocols, focusing on the generalization of quantum teleportation \cite{Bennett1993}, based on identical particles.

Key resources for teleportation and other quantum protocols are entangled states. The latter are quantum states that exhibit correlations between subsystems, which cannot be explained by a classical probability theory. Since identical particles cannot be individually addressed in general, it is not meaningful to define correlations between particles.
Instead, we apply the approach developed in \cite{Zanardi2001,Zanardi2004,Barnum2004,Narnhofer2004,Benatti2010,Benatti2012,Benatti2012-2,Benatti2014,Benatti2014-2}, where entanglement is defined via non-classical correlations between commuting subalgebras of observables, i.e. partitions of all the observables in groups that are physically addressable without mutual disturbance. Thus, correlations are defined by means of the possibility to write expectation values of physically accessible observables in terms of a classical probability distribution. This is a very general and powerful approach that recovers the standard definition of entanglement for distinguishable particles, and suitably generalizes the concept of entanglement to identical particles. We shall focus on massive particles at non-relativistic energies, like atoms and constituents of condensed matter systems, which satisfy the conservation of the total number of particles, mathematically described by a superselection rule \cite{Bartlett2007}. Based on this property, entanglement of identical particles was proven to exhibit markedly different features from those of photons and distinguishable particles, such as much simpler detectability \cite{Benatti2012,Benatti2012-2,Benatti2014}, a higher robustness against noise \cite{Benatti2012-2,Marzolino2013,Benatti2014}, and the geometry of entangled states drawn in the whole space of quantum states \cite{Benatti2012-2,Benatti2014}.

The properties of entanglement have been studied in Fermionic superconducting systems \cite{Zanardi2002-2,Vedral2004}, electrons in low-dimensional semiconductors \cite{Buscemi2007}, and bosonic ultracold gases \cite{Anders2006,Goold2009,Benatti2010,Benatti2011,Argentieri2011}, and exploited in several applications, such as quantum data hiding \cite{Verstraete2003}, teleportation \cite{Marinatto2001,Schuch2004,Heaney2009,Molotkov2010,Marzolino2015}, Bell's inequalities \cite{Ashhab2009,Heaney2010}, dense coding \cite{Heaney2009}, and quantum metrology \cite{Benatti2010,Benatti2011,Argentieri2011,Benatti2014}. Since quantum teleportation is a primitive for scalable quantum computers \cite{Gottesman1999}, and plays a fundamental role in measurement-based quantum computation \cite{Gross2007,Chiu2013}, we will focus on this special topic hereafter.

In the original teleportation protocol \cite{Bennett1993} one agent, Alice, wants to teleport an arbitrary, perhaps unknown, state to another agent, Bob. Alice owns the state to be teleported and a share of a resource state, and Bob owns the remaining part of the resource state. The algorithm of the standard teleportation is the following: \emph{i)} Alice performs a projective measurement onto the basis of maximally entangled states of her states; \emph{ii)} Alice sends Bob the result of the measurement; \emph{iii)} Bob performs a suitable operation on his state, conditioned on the message he got from Alice. In the setting of distinguishable particles, if the shared state is a pure, maximally entangled state, Bob ends up with a state identical to the initial state to be teleported, while the initial state has been transformed by the measurement.
The teleportation can also be applied to a part of an entangled state. In this case, Alice teleports the state of a system entangled with another one, and the initial entanglement is perfectly swapped with Bob at the end of the protocol. This application is called entanglement swapping, and can be useful for sharing entanglement at long distances.

In \cite{Marzolino2015}, we developed a teleportation protocol implementable with identical massive particles. We discussed its performances  with some physically interesting resource states: non-entangled states, maximally entangled states, $SU(2)$ (or atomic) coherent states, and ground states of the double well potential with two-body interactions. These two latter resource states are of particular interest, since they can be experimentally prepared with nowadays' technology. We observed that our protocol cannot perfectly teleport any general quantum state. Furthermore, we proved that this is a general feature for any teleportation protocol with identical massive particles which is implemented by local operations and classical communication between the agents. Nevertheless, perfect teleportation can be achieved when the number of particles in the resource state increases to infinity. In the present paper, we study the convergence of teleportation performances in this limit. We also apply our analysis to physically relevant resource states and to the robustness of teleportation performances in the presence of noise.

We will present the basic definition of entanglement in section \ref{entanglement} and the aforementioned teleportation protocol in section \ref{protocol}. In section \ref{perf}, we shall show that the efficiency grows with the number of particles in the resource state. From a detailed analysis of the resource state, we shall give simple sufficient conditions for the resource state to provide perfect teleportation when the number of particles goes to infinity. These conditions recover and generalize the resource states studied in \cite{Marzolino2013}. In section \ref{robustness}, we shall discuss the robustness of the teleportation performances, when the resource state is affected by noise. We shall sum up our conclusions in section \ref{discussions}.

\section{Entanglement} \label{entanglement}

Before discussing the teleportation protocol, we introduce the algebraic formalism on which the notion of entanglement is based. This formalism generalizes that of distinguishable particles and can be applied unambiguously to identical particles \cite{Zanardi2001,Zanardi2004,Barnum2004,Narnhofer2004,Benatti2010,Benatti2012,Benatti2012-2,Benatti2014,Benatti2014-2}. Physical observables are self-adjoint elements of a $C^\star$-algebra, that can be represented as the algebra ${\cal B}({\cal H})$ of bounded operators on a Hilbert space ${\cal H}$ \cite{BratteliRobinson,EspositoMarmoSudarshan}. A state of the system is a positive functional on ${\cal{B}({\cal H})}$, namely a linear map $\omega:{\cal{B}({\cal H})}\to\mathbbm{C}$ with $\omega(A^\dag A)\geqslant 0$ and with the normalization $\omega(\mathbbm{1})=1$.

We now identify two subsystems by means of two commuting subalgebras of operators.

\begin{definition}[Algebraic bipartition]
An \emph{algebraic bipartition} of ${\cal B}({\cal H})$ is any pair
$({\cal A}_1, {\cal A}_2)$ of commuting subalgebras,
${\cal A}_1, {\cal A}_2\subset {\cal B}({\cal H})$.
\end{definition}

\noindent
Any element of ${\cal A}_1$ commutes with any element of ${\cal A}_2$, $[{\cal A}_1, {\cal A}_2]=0$. The notion of locality lies in the commutativity of the subalgebras, ensuring that the results of a joined measurement of an observable in ${\cal A}_1$ and of an observables in ${\cal A}_2$ do not depend on the ordering. Therefore, subsystems are defined via observables that are individually and unambiguously addressable.

\begin{definition}[Local operators]
An operator is said to be \emph{local} with respect to the bipartition $({\cal A}_1, {\cal A}_2)$, if it is the product $A_1 A_2$ of an operator 
$A_1$ of ${\cal A}_1$ and another $A_2$ in ${\cal A}_2$.
\end{definition}

\noindent
We are now ready to define quantum correlated states with respect to a given algebraic bipartition.

\begin{definition}[Entangled states]
A state $\omega$ is said to be \emph{separable} with respect to the bipartition $({\cal A}_1, {\cal A}_2)$ if the expectation of any local operator $A_1 A_2$ can be decomposed into a linear convex combination of
products of local expectations:

\begin{equation}
\omega(A_1 A_2)=\sum_k\lambda_k \, \omega_k^{(1)}(A_1)\omega_k^{(2)}(A_2),\qquad \lambda_k\geq 0,\qquad \sum_k\lambda_k=1,
\end{equation}

\noindent
with $\omega_k^{(1)}$ and $\omega_k^{(2)}$ being \emph{bona fide} states of the system. Otherwise, the state is \emph{entangled}.
\end{definition}

We now specialize these definitions to $N$ bosons whose single particle Hilbert space has finite dimension $M$. Any state $\omega$ can be represented by a positive operation $\rho\in{\cal B}({\cal H})$ \cite{BratteliRobinson}, such that

\begin{equation}
\omega(A)=\textnormal{tr}(\rho A), \qquad \textnormal{tr} \, \rho=1.
\end{equation}

\noindent
{\it Pure states} $\rho=\rho^2$ are projectors, isomorphic to elements in ${\cal H}$, while states $\rho\neq\rho^2$ are called {\it mixed}.

Subalgebras of single particle observables in the above definition covers the usual definition of entanglement of distinguishable particles \cite{Werner1989}. At this point indistinguishability comes into play: in systems of identical particles, there is no subalgebras of physical observables acting on an individual particle. Thus, our approach provides the required generalisation.

The formalism of second quantization is more convenient for identical particles. Let us introduce creation and annihilation operators $a^\dagger_j,a_j$, $j=1, 2,\ldots,M$, of $M$ modes, that satisfy the Bosonic commutation relations, $[a_j,\,a^\dagger_l]=\delta_{jl}$. The many-body Hilbert space ${\cal H}_N$ of the system is spanned by the Fock states,

\begin{equation} \label{fock.states}
|k_1\rangle\otimes|k_2\rangle\otimes\cdots\otimes|k_M\rangle= \frac{(a_1^\dagger)^{k_1}\, (a_2^\dagger)^{k_2}\, \cdots\, (a_M^\dagger)^{k_M}\,|0\rangle}{\sqrt{k_1!\, k_2!\cdots k_M!}},
\end{equation}

\noindent
where $|0\rangle$ is called {\it vacuum state}, and the integer $k_j$ is the occupation number of the $j$-th mode, such that $\sum_{j=1}^M k_j=N$. Fixing the number of particles constrains the Hilbert space to ${\cal H}_N=P_N\mathbbm{F}$, where $\mathbbm{F}$ is the Fock space, spanned by the unconstrained Fock states (\ref{fock.states}), and $P_N$ is the projector on the eigenspace with $N$ particles. This is a constraint to the linearity of the Hilbert space, formalized by a superselection rule \cite{Bartlett2007}, where superpositions of different total number of particles are forbidden. The physical motivation of the superselection rule stems from the impossibility to create massive particles at non-relativistic energies. We use the notation of tensor products, to express mode-partitions: e.g. $a_1^\dag a_2^\dag|0\rangle=a_1^\dag|0_1\rangle\otimes a_2^\dag|0_2\rangle$. We have to keep in mind that this tensor product structure is constrained by the conservation of the number of particles.

The norm-closure of the set of polynomials in all creation and annihilation operators,
$\{a^\dagger_j,\, a_j\}$, $j=1,2,\dots, M$ is the algebra
${\cal B}({\cal H}_N)$.
\footnote{The algebra ${\cal{B}({\cal H})}$ is generated by differentiation of the so-called Weyl operators. Any function of the creation and annihilation operators is obtained by a proper differentiation of the Weyl operators \cite{BratteliRobinson,EspositoMarmoSudarshan}.}.
We define the bipartition of this algebra by splitting the set of creation and annihilation operators into two disjoint sets
$\{a_j^\dagger,a_j \, | \, j=1,2\dots,m\}$ and 
$\{a_l^\dagger,a_j \, | \, j=m+1,m+2,\dots,M\}$. The norm-closure of all polynomials in the creation and annihilation operators of the first (second) set is the subalgebra ${\cal A}_1$ (${\cal A}_2$). According to the previous definition, a pure state is $({\cal A}_1,{\cal A}_2)$-separable if and only if

\begin{equation}
|\psi\rangle={\cal P}(a^\dagger_1, \dots, a^\dagger_m)\cdot 
{\cal Q}(a^\dagger_{m+1},\ldots ,a^\dagger_M)\ |0\rangle, \label{sep}
\end{equation}

\noindent
and mixed $({\cal A}_1,{\cal A}_2)$-separable states are convex combinations of pure $({\cal A}_1,{\cal A}_2)$-separable states. See \cite{Benatti2010,Benatti2012,Benatti2014} for a detailed analysis.

Beyond its definition, entanglement can be quantified by means of measures of information carried by the subsystems. Minimal axioms fulfilled by these measures are \cite{Horodecci}: they are positive and vanish for all separable states, they are invariant under local unitary operations on the state, they do not increase under local operations and classical communication. The most easily computable measure of entanglement for mixed states is the so-called negativity \cite{Vidal2002}, based on the partial transposition operation, denoted by $^T$ \cite{Peres1996},

\begin{equation} \label{negativity}
{\cal N}(\rho)=\frac{\textnormal{tr}\sqrt{(\rho^T)^2}-1}{2}.
\end{equation}

\noindent
In general, negativity can be zero for some entangled states. Nevertheless, the set of entangled states of identical massive bosons that are not detected by the negativity has measure zero, and is the null set in the special case of two-mode states where ${\cal N}(\rho)=\sum_{k\neq j}|\rho_{k,j}|/2$ \cite{Benatti2012}.

Previous proposals for implementing quantum teleportation \cite{Marinatto2001,Molotkov2010} and general QIP \cite{Ionicioiu2002,Negretti2011,Rohling2012,Underwood2012,Inaba2014} with identical particles are based on states of identical particles that can be distinguished by the consumption of some degrees of freedom, as mentioned in the introduction. In the following, we analyse performances of a teleportation protocol that uses states of many-particles with a single two-mode degree of freedom. Thus, it will not be possible to distinguish identical particles.

\section{Teleportation protocol} \label{protocol}

In this section, we describe the teleportation protocol \cite{Marzolino2015}. In the standard protocol with distinguishable particles, each agent owns one particle. When teleportation is implemented with identical particles, each agent owns addressable subsystems which are not particles but rather modes \cite{Wurtz2009,Gross2010,Riedel2010}. We aim to teleport the state of one mode of a two-mode state $|\psi_{12}\rangle$, with the help of a two-mode shared resource state $\rho_{34}$. The labels $1,2,3,4$ number the modes. The initial global state is $|\psi_{12}\rangle\langle\psi_{12}|\otimes\rho_{34}$, where

\begin{equation} \label{initial.state}
|\psi_{12}\rangle=\sum_{k=0}^N c_k|k\rangle_1\otimes|N-k\rangle_2, \qquad \sum_{k=0}^N|c_k|^2=1,
\end{equation}

\noindent
and $\rho_{34}$ is a general state of $\nu$ two-mode particles

\begin{equation} \label{res}
\rho_{34}=\sum_{k,l=0}^\nu\left(\rho_{34}\right)_{k,l}|k\rangle_3{\,}_3\langle l|\otimes|\nu-k\rangle_4{\,}_4\langle \nu-l|.
\end{equation}

First, Alice performs a complete projective measurement on the second and the third mode. This means that when Alice measures the system, the latter is projected onto a state that depends on the measurement outcome $(l,\lambda)$. Moreover, the sum of all the projectors is the identity matrix. The projectors are $P_{23}^{(l,\lambda)}=|\phi_{23}^{(l,\lambda)}\rangle\langle\phi_{23}^{(l,\lambda)}|$, where

\begin{align}
& \label{states.meas} |\phi_{23}^{(l,\lambda)}\rangle=\sum_{k=\max\{0,-l\}}^{\min\{N,\nu-l\}}\frac{e^{2\pi i\frac{\lambda k}{{\cal C}_l}}}{\sqrt{\cal C}_l}|N-k\rangle_2\otimes|k+l\rangle_3, \\
& l\in\{-N,-N+1\dots,\nu\}, \qquad \lambda\in\{0,1,\dots,{\cal C}_l-1\}, \\
& {\cal C}_l=
\begin{cases}
N+l+1 & \mbox{if } -N\leqslant l\leqslant 0 \\
N+1 & \mbox{if } 0\leqslant l\leqslant \nu-N \\
\nu-l+1 & \mbox{if } \nu-N\leqslant l\leqslant \nu
\end{cases}.
\end{align}

\noindent
Notice that each state (\ref{states.meas}) has a fixed number of particles, in accordance with the conservation of the total particle number, but that different $|\phi_{23}^{(l,\lambda)}\rangle$ do not have all the same amount of entanglement. In particular, states (\ref{states.meas}) with $0\leqslant l\leqslant\nu-N$ have the same amount of entanglement as the maximally entangled states of $N$ two-mode particles, in terms of the entanglement measures discussed in \cite{Benatti2012,Benatti2012-2,Benatti2014}. States (\ref{states.meas}) with other values of $l$ have less entanglement. Notice also that it is not possible to write a complete projective measurement with states that preserve the total number of particles and have all the same entanglement \cite{Marzolino2015}.

If Alice measures the outcome $(l,\lambda)$, the state changes into

\begin{align} \label{op.telep}
& \mathbbm{1}_1\otimes P_{23}^{(l,\lambda)}\otimes\mathbbm{1}_4\big(|\psi_{12}\rangle\langle \psi_{12}|\otimes\rho_{34}\big)\mathbbm{1}_1\otimes P_{23}^{(l,\lambda)}\otimes\mathbbm{1}_4=\sum_{k=\max\{0,-l\}}^{\min\{N,\nu-l\}}\left(\rho_{34}\right)_{k+l,j+l}c_k\bar c_j\cdot \nonumber \\
& \cdot\frac{e^{2\pi i\lambda\frac{j-k}{{\cal C}_l}}}{{\cal C}_l}|k\rangle_1{\,}_1\langle j|\otimes|\phi_{23}^{(l,\lambda)}\rangle\langle\phi_{23}^{(l,\lambda)}|\otimes|\nu-k-l\rangle_4{\,}_4\langle\nu-j-l|,
\end{align}

\noindent
where $\mathbbm{1}_j=\sum_{k\geqslant 0}|k\rangle_j{\,}_j\langle k|$ is the identity operator on the $j$-th mode. Alice sends Bob the outcome $(l,\lambda)$ of her measurement, via a classical channel, and subsequently Bob applies the operation $V_4^{(l,\lambda)}\rho (V_4^{(l,\lambda)})^\dag$ to the fourth mode, where the operator

\begin{equation} \label{V_4}
V_4^{(l,\lambda)}=\sum_{k=\max\{0,-l\}}^{\min\{N,\nu-l\}} e^{2\pi i\frac{\lambda k}{{\cal C}_l}}|N-k\rangle_4{\,}_4\langle \nu-k-l|
\end{equation}

\noindent
is fully consistent with the total number superselection rule as shown in \cite{Marzolino2015}.

Since quantum mechanical measurements are probabilistic events \cite{Landau,Messiah,NielsenChuang}, Alice obtains the measurement outcome $(l,\lambda)$ with probability $p_{(l,\lambda)}$, and the consequent final state is $\rho_{14}^{(l,\lambda)}$, where

\begin{align} \label{telep.term}
p_{(l,\lambda)}\rho_{14}^{(l,\lambda)}\equiv & \, \textnormal{tr}_{23}\Big[\mathbbm{1}_1\otimes P_{23}^{(l,\lambda)}\otimes V_{4}^{(l,\lambda)}\big(|\psi_{12}\rangle\langle\psi_{12}|\otimes\rho_{34}\big)\mathbbm{1}_1\otimes P_{23}^{(l,\lambda)}\otimes(V_{4}^{(l,\lambda)})^\dag\Big] \nonumber \\
= & \sum_{k,j=\max\{0,-l\}}^{\min\{N,\nu-l\}}\left(\rho_{34}\right)_{k+l,j+l}\frac{c_k\bar c_j}{{\cal C}_l}|k\rangle_1{\,}_1\langle j|\otimes|N-k\rangle_4{\,}_4\langle N-j|,
\end{align}

\noindent
with $\textnormal{tr}\big(\rho_{14}^{(l,\lambda)}\big)=1$, and where $\textnormal{tr}_{23}$ is the trace over the second and the third mode. The average teleported state, generated by the operation $\mathcal{T}$, over all Alice's outcome is

\begin{align} \label{teleported}
\mathcal{T}\big[|\psi_{12}\rangle\langle\psi_{12}|\big]= & \sum_{l=-N}^\nu\sum_{\lambda=0}^{{\cal C}_l-1}p_{(l,\lambda)}\rho_{14}^{(l,\lambda)} \nonumber \\
= & \sum_{l=-N}^{\nu}\sum_{k,j=\max\{0,-l\}}^{\min\{N,\nu-l\}} c_k\bar c_j\left(\rho_{34}\right)_{k+l,j+l}|k\rangle_1{\,}_1\langle j|\otimes|N-k\rangle_4{\,}_4\langle N-j|,
\end{align}

The efficiency of teleportation is quantified by the average overlap between the state (\ref{initial.state}) and the teleported state (\ref{teleported}). This quantity, called {\it fidelity}, is $f=\int d\psi \langle\psi|\mathcal{T}[|\psi\rangle\langle\psi|]|\psi\rangle$ \cite{
Horodecki1999-2}, where $d\psi$ is the uniform distribution over all pure states. Defining $c_k=r_k e^{i\varphi_k}$ with $r_k\geq 0$ and $0\leq\varphi_k<2\pi$, the uniform distribution $d\psi$ is induced by the Haar measure of the unitary group \cite{Horodecki1999-2,Wootters1990}:

\begin{equation} \label{measure}
d\psi=\frac{N!}{\pi^{N+1}}\delta\left(1-\sum_{k=0}^N r_k^2\right)\prod_{k=0}^N r_k dr_k d\varphi_k.
\end{equation}

\noindent
The fidelity is

\begin{equation} \label{fidelity}
f=\frac{2}{N+2}+\sum_{\substack{k,j=0\\ k\neq j}}^\nu\frac{\max\{0,N+1-|k-j|\}}{(N+1)(N+2)}\left(\rho_{34}\right)_{k,j}\in[0,1].
\end{equation}

A different figure of merit is the average entanglement between the first and the fourth mode in the teleported states (\ref{telep.term}). This figure of merit is relevant if the teleportation is applied for sharing entanglement at long distances. We quantify entanglement of the final states (\ref{telep.term}) with the negativity (\ref{negativity}). Thus, the average final entanglement is

\begin{equation} \label{av.ent}
E=\int d\psi\sum_{l=-N}^\nu\sum_{\lambda=0}^{{\cal C}_l-1}p_{(l,\lambda)}{\cal N}(\rho_{14}^{(l,\lambda)})=\frac{\pi}{8}\sum_{\substack{k,j=0\\ k\neq j}}^\nu\frac{\max\{0,N+1-|k-j|\}}{(N+1)}|\left(\rho_{34}\right)_{k,j}|\in\left[0,\frac{\pi}{8}\right],
\end{equation}

\noindent
The entanglement of each final state $\rho_{14}^{(l,\lambda)}$ does not depend on whether the local operation $V_4^{(l,\lambda)}$ has been performed. We notice that

\begin{equation} \label{triangle}
\frac{8E}{\pi}\geq(N+2)f-2
\end{equation}
follows from the triangle inequality of the absolute value, and from the positivity of the fidelity $f$. The upper bound of $E$ is the average entanglement over all pure initial states, $\int d\psi \, {\cal N}(|\psi\rangle\langle\psi|)=\pi N/8$, which cannot be exceeded because the teleportation protocol does not act on the first mode and cannot increase the entanglement between the first mode and the rest.

In \cite{Marzolino2015}, we proved that {\it deterministic perfect teleportation}, i.e. $f=1$ and $E=\pi N/8$, is not possible for any resource state of finitely many particles. This is a property of any general teleportation protocol performed on identical massive particles by local operations of Alice's and Bob's sides plus classical communication.

\begin{proposition} \label{imp}
Deterministic perfect teleportation is never possible for a fixed and finite number of identical particles.
\end{proposition}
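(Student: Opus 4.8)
The plan is to derive the strict inequality $f<1$ for every finite particle number $\nu$ directly from the explicit fidelity formula (\ref{fidelity}); since deterministic perfect teleportation requires in particular $f=1$ (already this fails, regardless of the value of $E$), this settles the claim. Throughout one may assume $N\ge 1$, as otherwise there is nothing to teleport. First I would rewrite (\ref{fidelity}) in spectral form: since $\max\{0,N+1-|k-j|\}$ is symmetric in $k,j$ and $(\rho_{34})_{j,k}=\overline{(\rho_{34})_{k,j}}$, the off-diagonal sum in (\ref{fidelity}) equals $\big[\operatorname{tr}(T\rho_{34})-(N+1)\big]/\big[(N+1)(N+2)\big]$, where $T$ is the real symmetric $(\nu+1)\times(\nu+1)$ Toeplitz matrix with $T_{k,j}=\max\{0,N+1-|k-j|\}$, $k,j\in\{0,\dots,\nu\}$, and where I have used $\sum_k(\rho_{34})_{k,k}=1$ to supply the diagonal term $T_{k,k}=N+1$. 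Hence
\begin{equation*}
f=\frac{2}{N+2}+\frac{\operatorname{tr}(T\rho_{34})-(N+1)}{(N+1)(N+2)},
\end{equation*}
so $f=1$ is \emph{equivalent} to $\operatorname{tr}(T\rho_{34})=(N+1)^2$. As $\rho_{34}$ is a density matrix, $\operatorname{tr}(T\rho_{34})\le\lambda_{\max}(T)$, and it therefore suffices to prove the strict bound $\lambda_{\max}(T)<(N+1)^2$.

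Next I would identify $T$ as the finite $\{0,\dots,\nu\}$-section of the bi-infinite Toeplitz matrix whose symbol is the Fej\'er kernel, i.e. $T_{k,j}=\frac{1}{2\pi}\int_{-\pi}^{\pi}F_N(\theta)e^{-i(k-j)\theta}\,d\theta$ with
\begin{equation*}
F_N(\theta)=\sum_{m=-N}^{N}(N+1-|m|)\,e^{im\theta}=\left|\sum_{j=0}^{N}e^{ij\theta}\right|^2=\left(\frac{\sin\frac{(N+1)\theta}{2}}{\sin\frac{\theta}{2}}\right)^{\!2}\ge 0,
\end{equation*}
which on the circle satisfies $F_N(\theta)\le (N+1)^2$, with equality only at $\theta=0$ because the $N+1$ unit summands $e^{ij\theta}$ are aligned only when $e^{i\theta}=1$. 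For any unit vector $w=(w_0,\dots,w_\nu)$ set $\hat w(\theta)=\sum_{k=0}^{\nu}w_k e^{ik\theta}$; then by the integral formula for $T_{k,j}$ and Parseval,
\begin{equation*}
\langle w|T|w\rangle=\frac{1}{2\pi}\int_{-\pi}^{\pi}F_N(\theta)\,|\hat w(\theta)|^2\,d\theta
<(N+1)^2\cdot\frac{1}{2\pi}\int_{-\pi}^{\pi}|\hat w(\theta)|^2\,d\theta=(N+1)^2,
\end{equation*}
the strict inequality holding because $|\hat w(\theta)|^2$ is a nonzero trigonometric polynomial, hence positive off a finite set, whereas $F_N(\theta)<(N+1)^2$ off $\theta=0$. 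Taking the supremum over $w$ gives $\lambda_{\max}(T)<(N+1)^2$, hence $\operatorname{tr}(T\rho_{34})<(N+1)^2$ and $f<1$, which completes the argument.

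I regard the rearrangement of (\ref{fidelity}) and the closed form/elementary bound for $F_N$ as routine and would not grind through them. The one step that genuinely requires care is the \emph{strictness} $\lambda_{\max}(T)<(N+1)^2$: the weak bound $\lambda_{\max}(T)\le(N+1)^2$ is immediate, since $T$ is a principal submatrix of the bi-infinite positive Toeplitz matrix with symbol $F_N$, whose operator norm is $\max_\theta F_N=(N+1)^2$; but the strict inequality is precisely the fact that a finite Toeplitz section never attains the supremum of a symbol whose maximiser is isolated — and this is exactly why perfect teleportation, $f\to1$, can only be approached in the limit $\nu\to\infty$ rather than reached at finite $\nu$. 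A Fourier-free alternative would be to write $T=\sum_m|v_m\rangle\langle v_m|$, with $v_m$ the indicator vector of the window $\{m,\dots,m+N\}\cap\{0,\dots,\nu\}$, and apply the Cauchy--Schwarz inequality twice; but there one must separately track the boundary windows, which are shorter than $N+1$, in order to recover strictness, so the spectral argument above is the cleaner route.
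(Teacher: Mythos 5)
Your Toeplitz/Fej\'er-kernel argument is mathematically sound as far as it goes: writing the off-diagonal sum in (\ref{fidelity}) as $\operatorname{tr}(T\rho_{34})-(N+1)$ with $T_{k,j}=\max\{0,N+1-|k-j|\}$, recognising the symbol $F_N(\theta)=\big(\sin\frac{(N+1)\theta}{2}/\sin\frac{\theta}{2}\big)^2\le (N+1)^2$ with equality only at $\theta=0$, and using that $|\hat w(\theta)|^2$ is a nonzero trigonometric polynomial to get the strict bound $\lambda_{\max}(T)<(N+1)^2$, correctly yields $f<1$ for every finite $\nu$ (and, being finite-dimensional, the strictness does pass to the maximiser). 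This is a clean, quantitative way of seeing why the protocol of section \ref{protocol} can only approach $f=1$ as $\nu\to\infty$.

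However, it proves a weaker statement than Proposition \ref{imp} is meant to assert. As the sentence introducing the proposition makes explicit, the claim is a no-go result for \emph{any} teleportation protocol with identical massive particles implemented by local operations and classical communication --- arbitrary (possibly non-projective, multi-round) measurements on Alice's side and arbitrary conditional operations on Bob's side --- not just for the particular measurement $\{P_{23}^{(l,\lambda)}\}$ and correction $V_4^{(l,\lambda)}$ whose fidelity is given by (\ref{fidelity}). Your argument is anchored to that specific formula, so it cannot exclude that some other LOCC scheme with a finite-$\nu$ resource reaches $f=1$ and $E=\pi N/8$. The paper's proof (given in \cite{Marzolino2015}, not reproduced here) works at that level of generality, exploiting the particle-number superselection rule: in essence, number conservation forbids a complete measurement whose outcomes are all maximally entangled number-conserving states (as remarked after (\ref{states.meas})), and more generally constrains the LOCC operations so that deterministic perfect teleportation is impossible for any finite particle number. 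To close the gap you would need to lift your spectral bound from this one protocol to the full LOCC-plus-superselection setting, which is precisely the content of the cited argument.
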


\section{Teleportation performances} \label{perf}

In this section we shall discuss the teleportation performances, i.e. the fidelity and the average final entanglement. A natural reference is the teleportation performance given by any separable resource state $f_\textnormal{sep}=\frac{2}{N+2}$, $E_\textnormal{sep}=0$. Another interesting resource state is the maximally entangled state $\rho_{34}=|\phi_{34}\rangle\langle\phi_{34}|$ of $\nu$ two-mode particles \cite{Benatti2012,Benatti2012-2} with

\begin{equation} \label{max.ent.state}
|\phi_{34}\rangle=\frac{1}{\sqrt{\nu+1}}\sum_{k=0}^\nu|k\rangle_3\otimes|\nu-k\rangle_4.
\end{equation}

\noindent
For any measurement outcome with $0\leqslant l\leqslant\nu-N$, the teleported state (\ref{telep.term}) is perfectly the same as the initial state (\ref{initial.state}). These measurement outcomes occur with an overall probability $\frac{\nu-N+1}{\nu+1}$. In this sense, the resource state (\ref{max.ent.state}) provides a {\it probabilistic perfect teleportation}. The fidelity and the average final entanglement are $f_\textnormal{max \, ent}=1-\frac{N}{3(\nu+1)}$, and $E_\textnormal{max \, ent}=\frac{\pi N(3\nu-N+1)}{24(\nu+1)}$. Despite of proposition \ref{imp}, we showed in \cite{Marzolino2015} with some exemplary resource states, such as the maximally entangled state \eqref{max.ent.state}, atomic coherent states and the ground state of the double well potential with intra-well interactions, that deterministic perfect teleportation can be approached in the limit $\nu\to\infty$ with fixed $N$.

We now derive sufficient conditions for the deterministic perfect teleportation in the limit of infinitely many particles of the resource state $\nu\to\infty$, namely {\it asymptotically perfect teleportation}. These conditions generalize the examples discussed in \cite{Marzolino2015}.

\subsection{Sufficient conditions for asymptotically perfect teleportation}

The first step is to make a continuum approximation of the entries of the resource state:

\begin{equation} \label{cont.appr.mixed}
(\rho_{34})_{k,j}\simeq\omega(z,y)\frac{2}{\nu}.
\end{equation}
The new variables $z=1-2k/\nu$ and $y=1-2j/\nu$ represent the particle number imbalance between the third and the fourth mode. The factor $2/\nu$ guarantees the normalization

\begin{equation}
1=\sum_{k=0}^\nu(\rho_{34})_{k,k}=\int_{-1}^1 dz \, \omega(z,z),
\end{equation}
where we approximated the sum with an integral. We exploit this approximation in the fidelity (\ref{fidelity}) and in the average final entanglement (\ref{av.ent}). For this purpose, we extend the sums in (\ref{fidelity}) and in (\ref{av.ent}) also to the indices $k=j$ and subtract the corresponding values, computable by the identity

\begin{equation}
\sum_{k,j}\delta_{k,j}\max\big\{0,N+1-|k-j|\big\}(\rho_{34})_{k,j}=N+1.
\end{equation}
The results of the continuum approximation \eqref{cont.appr.mixed} are

\begin{align}
\label{fid.int} f\simeq & \frac{1}{N+2}+\frac{\nu}{2}\int_{-1}^1 dz\int_{-1}^1 dy\frac{\max\left\{0,N+1-|z-y|\frac{\nu}{2}\right\}}{(N+1)(N+2)} \, \omega(z,y) \\
\label{ent.int} E\simeq & -\frac{\pi}{8}+\frac{\pi\nu}{16}\int_{-1}^1 dz\int_{-1}^1 dy\frac{\max\left\{0,N+1-|z-y|\frac{\nu}{2}\right\}}{N+1} \, |\omega(z,y)|.
\end{align}

\noindent
The following propositions identify sufficient conditions for asymptotically perfect teleportation. We first focus on pure states in the continuum approximation:

\begin{equation} \label{chi34}
|\chi_{34}\rangle=\sum_{k=0}^\nu x_k|k\rangle_3\otimes|\nu-k\rangle_4, \qquad x_k\simeq\chi(z)\sqrt{\frac{2}{\nu}}, \qquad \omega(z,y)=\bar\chi(z)\chi(y),
\end{equation}

\begin{proposition} \label{prop}
If there are two real functions $\delta(\nu)\in[-1,1]$ and $\alpha(\nu)$ such that the function $\chi(z)$ is rescaled as $\chi(z)=\sqrt{\alpha} \, \zeta((z+\delta)\alpha)$ and the function $\zeta(\cdot)$ does not depend on $\nu$, then the asymptotic teleportation performances scale as

\begin{equation}
f=1-R-{\cal O}\left(\alpha(\nu)\frac{N}{\nu}\right), \qquad E=\frac{\pi N}{8}\left(1-R-{\cal O}\left(\alpha(\nu)\frac{N}{\nu}\right)\right). \label{fid.ent.asym}
\end{equation}

\noindent
The remainder $R$ is $R=o(\alpha(\nu))$ if $\zeta$ is continuous in $[\alpha(\delta-1),\alpha(1+\delta)]$, $R=o(\alpha^2(\nu)N/\nu)$ if $\zeta$ is differentiable in $[\alpha(\delta-1),\alpha(1+\delta)]$, and $R={\cal O}(\alpha^3(\nu)N^2/\nu^2)$ if $\zeta$ is twice differentiable in $[\alpha(\delta-1),\alpha(1+\delta)]$.
\end{proposition}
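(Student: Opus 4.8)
\section*{Proof plan}

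The plan is to substitute the ansatz $\chi(z)=\sqrt{\alpha}\,\zeta((z+\delta)\alpha)$ into the continuum formulas (\ref{fid.int})--(\ref{ent.int}) and to read the result off as an approximate-identity (mollifier) estimate for the fixed profile $\zeta$. For the fidelity, with $\omega(z,y)=\bar\chi(z)\chi(y)$, change variables to $w=(z+\delta)\alpha$, $v=(y+\delta)\alpha$, which gives
\[
f=\frac{1}{N+2}+\frac{\nu}{2\alpha(N+1)(N+2)}\int_I\!\!\int_I K(w-v)\,\bar\zeta(w)\zeta(v)\,dw\,dv ,
\]
where $I=[(\delta-1)\alpha,(1+\delta)\alpha]$ has length $2\alpha$ and $K(u)=\max\{0,N+1-\tfrac{\nu}{2\alpha}|u|\}$ is the triangular kernel, supported on $|u|\le h:=2\alpha(N+1)/\nu$. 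The same substitution in (\ref{ent.int}) produces the identical expression with $\bar\zeta(w)\zeta(v)$ replaced by $|\zeta(w)|\,|\zeta(v)|$ and the prefactor adjusted. Since $\zeta$ is $\nu$-independent, the entire $\nu$-dependence now sits in the kernel width $h$ and in the interval $I$, and the claim becomes a statement about how fast the rescaled kernel $K/\!\!\int K$ acts as an approximate identity on $\zeta$, plus boundary corrections.

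I would first extract the leading term using the elementary kernel moments $\int K\,du=2\alpha(N+1)^2/\nu$, $\int uK\,du=0$, $\int u^2K\,du=4\alpha^3(N+1)^4/(3\nu^3)$, together with the normalisation $\int_I|\zeta|^2=1$ inherited from $\sum_k|x_k|^2=1$: replacing $\zeta(v)$ by $\zeta(w)$ and dropping the truncation of the inner integral at $\partial I$ gives exactly $f\to\tfrac1{N+2}+\tfrac{N+1}{N+2}=1$, and likewise $E\to\tfrac{\pi N}{8}$, so (\ref{fid.ent.asym}) is a statement about the two error sources. The first is the \emph{smoothing error}: where $K(w-\cdot)$ is untruncated, the discrepancy between $\int K(u)\zeta(w+u)\,du$ and $\zeta(w)\int K$ is, pointwise in $w$, controlled by the modulus of continuity of $\zeta$ on a window of width $h$ --- it is $o(1)$ if $\zeta$ is continuous, $o(h)$ if $\zeta\in C^1$ (the linear Taylor term being annihilated by $\int uK\,du=0$), and $\mathcal O(h^2)$ if $\zeta\in C^2$ (via $\int u^2K\,du$ and the $\zeta''$ term); multiplying by $|I|=2\alpha$ and the prefactor yields the remainder $R$ in its three forms $o(\alpha)$, $o(\alpha^2N/\nu)$, $\mathcal O(\alpha^3N^2/\nu^2)$. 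The second source is the \emph{boundary loss}: for $w$ within $h$ of $\partial I$ the kernel is truncated, so a bounded fraction of its mass is missing; as $\zeta$ is bounded on the closed interval $I$, the total lost weight is $\mathcal O(h)=\mathcal O(\alpha N/\nu)$, which after the prefactor is the explicit term $\mathcal O(\alpha(\nu)N/\nu)$ in (\ref{fid.ent.asym}). The one-sided truncation also reactivates the odd Taylor term near $\partial I$, but that contribution involves $\zeta'$ at the endpoints and is of the same or smaller order, so it is absorbed.

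For $E$ every step is identical after replacing $\zeta$ by $|\zeta|$: since $|\zeta(w)|\,|\zeta(v)|$ and $\bar\zeta(w)\zeta(v)$ agree in modulus and $\int_I|\zeta|^2=1$, the leading term is still $\tfrac{\pi N}{8}$, and the smoothing and boundary estimates apply verbatim (with the stated smoothness understood for $|\zeta|$, which is as smooth as $\zeta$ away from its zeros), giving $E=\tfrac{\pi N}{8}(1-R-\mathcal O(\alpha N/\nu))$ with the same $R$. I expect the delicate point to be the boundary-layer analysis: showing that the mass-loss part is exactly $\mathcal O(\alpha N/\nu)$ uniformly in $\delta$ and $\alpha$, that the resurrected odd-derivative contribution near $\partial I$ is genuinely dominated by it (or by $R$), and that these bounds survive when the bulk of $\zeta$ sits near an endpoint of $I$, i.e.\ when $\delta\to\pm1$. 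The interior smoothing bound is, by contrast, the standard approximate-identity argument once the three kernel moments are recorded, and the passage from $f$ to $E$ is immediate.
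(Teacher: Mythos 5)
Your proposal is correct and follows essentially the same route as the paper: after the rescaling, the paper likewise passes to sum/difference variables, expands $\zeta$ about the diagonal (the odd term killed by the symmetric triangular kernel, whose moments are exactly the paper's identity \eqref{integral}), and splits off a boundary layer of width $\epsilon=\alpha(\nu)(N+1)/\nu$ that produces the explicit ${\cal O}(\alpha N/\nu)$ term, yielding the same three remainder orders for $R$. The only minor difference is the treatment of $E$: the paper primarily deduces the second estimate in \eqref{fid.ent.asym} from the fidelity via the inequality \eqref{triangle} together with $E\leq\pi N/8$, mentioning your direct computation with $|\zeta|$ (and the zero-crossing caveat you flag) only as an alternative.
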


\begin{proof}
First, we notice that the factor $\sqrt{\alpha}$ in the scaling ensures the normalization of the function $\zeta(\cdot)$:

\begin{equation}
1=\int_{-1}^1 dz|\chi(z)|^2=\int_{\alpha(\delta-1)}^{\alpha(1+\delta)}dz'|\zeta(z')|^2,
\end{equation}
where $z'=(z+\delta)\alpha$. The factor $\delta$ is the counterpart in the continuum approximation of the mean imbalance

\begin{equation} \label{imbalance}
\langle\chi_{34}| \, \frac{a_3^\dag a_3-a_4^\dag a_4}{\nu} \, |\chi_{34}\rangle
\end{equation}
that may depend on $\nu$, and takes into account the fact that the coefficients $x_k$ can be picked on any Fock state. We now estimate the integral in equation (\ref{fid.int}). Rescaling $\chi(z)=\sqrt{\alpha}\zeta((z+\delta)\alpha)$, we compute

\begin{align}
& \nu\int_{-1}^1 dz\int_{-1}^1 dy \max\left\{0,N+1-|z-y|\frac{\nu}{2}\right\}\bar\chi(z)\chi(y) \nonumber \\
& =\frac{\nu}{\alpha(\nu)}\int_{\alpha(\delta-1)}^{\alpha(1+\delta)} dz'\int_{\alpha(\delta-1)}^{\alpha(1+\delta)} dy'\max\left\{0,N+1-|z'-y'|\frac{\nu}{2\alpha}\right\}\bar\zeta(z')\zeta(y'),
\end{align}

\noindent
where $z'=(z+\delta)\alpha$ and $y'=(y+\delta)\alpha$. With a change of variables, $z''=(z'+y')/2$, $y''=z'-y'$, and defining $\epsilon=\alpha(\nu)(N+1)/\nu$, the previous integral becomes

\begin{align} \label{int.int}
& \frac{\nu}{\alpha(\nu)}\int_{\alpha(\delta-1)}^{\alpha(1+\delta)} dz''\int_{2|z''-\alpha\delta|-2\alpha}^{2\alpha-2|z''-\alpha\delta|} dy''\max\left\{0,N+1-|y''|\frac{\nu}{2\alpha}\right\}\bar\zeta\left(z''+\frac{y''}{2}\right)\zeta\left(z''-\frac{y''}{2}\right) \nonumber \\
& =\frac{\nu}{\alpha(\nu)}\int_{\alpha(\delta-1)+\epsilon}^{\alpha(1+\delta)-\epsilon} dz''\int_{-2\epsilon}^{2\epsilon} dy''\left(N+1-|y''|\frac{\nu}{2\alpha}\right)\bar\zeta\left(z''+\frac{y''}{2}\right)\zeta\left(z''-\frac{y''}{2}\right) \nonumber \\
& +\frac{\nu}{\alpha(\nu)} \!\!\!\!\!\!\! \int\limits_{\alpha-\epsilon<|z''-\alpha\delta|<\alpha} \!\!\!\!\!\!\! dz'' \int_{2|z''-\alpha\delta|-2\alpha}^{2\alpha-2|z''-\alpha\delta|} dy''\left(N+1-|y''|\frac{\nu}{2\alpha}\right)\bar\zeta\left(z''+\frac{y''}{2}\right)\zeta\left(z''-\frac{y''}{2}\right).
\end{align}

Expanding the function $\zeta$ as $\zeta(z''\pm y''/2)=\zeta(z'')+\tilde R$ in (\ref{int.int}) and noting that

\begin{equation} \label{integral}
\int_{-b}^b dx (a-|x|) x^j=\frac{b \, (b^j+(-b)^j)(2a-b+aj-bj)}{j^2+3j+2}
\end{equation}
for $a\geq b\geq 0$ and $j\geq 0$, we can compute the integrals in $y''$. The dominant term comes form the first double integral in the right hand side of (\ref{int.int}) and is constant in $\epsilon$, while the rest of the right hand side of (\ref{int.int}) is of order $(N+1)^2{\cal O}(\epsilon)$. Computing the dominant term, the integral (\ref{int.int}) is

\begin{equation}
2(N+1)^2\left(\int_{\alpha(\delta-1)+\epsilon}^{\alpha(1+\delta)-\epsilon} dz''|\zeta(z'')|^2-R-{\cal O}(\epsilon)\right)=2(N+1)^2\left(1-R-{\cal O}(\epsilon)\right), \label{estimate}
\end{equation}

\noindent
where $R$ is the remainder originated form the first double integral in the right hand side of (\ref{int.int}). In equality \eqref{estimate}, we estimated the error between the remaining integral and the normalization:

\begin{equation}
1=\int_{\alpha(\delta-1)}^{\alpha(1+\delta)}dz''\zeta(z'')=\int_{\alpha(\delta-1)+\epsilon}^{\alpha(1+\delta)-\epsilon} dz''|\zeta(z'')|^2+{\cal O}(\epsilon).
\end{equation}
If $\zeta$ is continuous, then $\tilde R=o(1)$ and $R=o(\alpha(\nu))$, with $\lim_{y''\to 0}o(1)=0$ and, hence, $\lim_{\epsilon\to 0}o(1)=0$. If $\zeta$ is differentiable, then

\begin{equation}
\tilde R={\cal O}\big(y''\big)\Rightarrow\tilde R={\cal O}(\epsilon)\Rightarrow R=o\big(\alpha(\nu)\epsilon\big).
\end{equation}
If $\zeta$ is twice differentiable, then

\begin{equation}
\tilde R=y''\frac{d\zeta(z)}{dz}+{\cal O}\big(y''\big)^2 \Rightarrow \tilde R=y''\frac{d\zeta(z)}{dz}+{\cal O}(\epsilon^2)\Rightarrow R={\cal O}\big(\alpha(\nu)\epsilon^2\big).
\end{equation}
Plugging these estimations into equation (\ref{fid.int}), we prove the first equation in (\ref{fid.ent.asym}).

The second equation in (\ref{fid.ent.asym}) follows from the application of the first equation in (\ref{fid.ent.asym}) to the inequality \eqref{triangle} and the fact that $E$ cannot be larger than $\pi N/8$. The same result can be proven by a straightforward computation, as done for the first equation in (\ref{fid.ent.asym}). In equation (\ref{ent.int}), an integral similar to (\ref{int.int}) should be estimated, where $\zeta(\cdot)$ is replaced by $|\zeta(\cdot)|$. In these estimates, we have to notice that if $\zeta$ is continuous then $|\zeta|$ is continuous as well. However, if $\zeta$ is differentiable then $|\zeta|$ is no longer differentiable in the points where $\zeta$ crosses zero with non-vanishing derivative. These points are isolated and contribute with measure zero to the integral in (\ref{ent.int}).
\end{proof}

Hence, if the remainders in formulas (\ref{fid.ent.asym}) go to zero as $\nu\to\infty$, then the resource state provides asymptotically perfect teleportation.

\begin{remark} \label{rem1}
\emph{The physical meaning of the function $\delta(\nu)$ is that it proves the independence of Proposition \eqref{prop} from the mean imbalance \eqref{imbalance} between the modes, that represents the Fock state around which the superposition (\ref{chi34}) is centred. Proposition \ref{prop} implies that the convergence to the perfect teleportation does not depend on the mean imbalance \eqref{imbalance}.}
\end{remark}

\begin{remark} \label{rem2}
\emph{The crucial point in the above proof is the existence of a scaling such that $\zeta(z''\pm y''/2)=\zeta(z'')+o(1)$, with $\lim_{\nu\to\infty}o(1)=1$. One could directly impose this condition or the more general condition $\chi(z\pm y/2)=\chi(z)+o(1)$, without the rescaling $z\to(z+\delta(\nu))\alpha(\nu)$. However, the weaker conditions on the continuity or differentiability of $\zeta$, and the independence on $\nu$ are easier to be checked and to be exploited, to find states (\ref{chi34}) which provide asymptotically perfect teleportation.}
\end{remark}

\begin{proposition} \label{prop2}
If two states

\begin{equation}
|\chi_{34}^{(j)}\rangle=\sum_{k=0}^\nu x_k^{(j)}|k\rangle_3\otimes|\nu-k\rangle_4, \qquad j=1,2,
\end{equation}

\noindent
with non-negative coefficients $x_k^{(1,2)}\geq 0$, provide asymptotically perfect teleportation for $\nu\to\infty$, and turn out to be orthogonal in the same limit, $\lim_{\nu\to\infty}\langle\chi_{34}^{(2)}|\chi_{34}^{(1)}\rangle=0$, then their normalized non-negative superposition $|\chi_{34}\rangle=c_1|\chi_{34}^{(1)}\rangle+c_2|\chi_{34}^{(2)}\rangle$ with $c_1,c_2\geq 0$, provides asymptotically perfect teleportation.

In addition, if the states $|\chi_{34}^{(1,2)}\rangle$ satisfy proposition \ref{prop} with real functions $\delta_{1,2}(\nu)\in[-1,1]$ and $\alpha_{1,2}(\nu)$, respectively, the state $|\chi_{34}\rangle$ satisfies the same estimations (\ref{fid.ent.asym}) with $\max\{\alpha_1(\nu),\alpha_2(\nu)\}$ instead of $\alpha(\nu)$.
\end{proposition}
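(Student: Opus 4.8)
The plan is to compute the teleportation performances $f$ and $E$ directly for the superposition $|\chi_{34}\rangle = c_1|\chi_{34}^{(1)}\rangle + c_2|\chi_{34}^{(2)}\rangle$ by expanding the density matrix entries $(\rho_{34})_{k,j} = x_k \bar x_j$ bilinearly. Writing $x_k = c_1 x_k^{(1)} + c_2 x_k^{(2)}$, the product $x_k\bar x_j$ splits into four terms: the two diagonal terms $c_1^2 x_k^{(1)}\bar x_j^{(1)}$ and $c_2^2 x_k^{(2)}\bar x_j^{(2)}$, plus the two cross terms $c_1 c_2 x_k^{(1)}\bar x_j^{(2)}$ and its conjugate. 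Substituting into the fidelity formula \eqref{fidelity} and the entanglement formula \eqref{av.ent}, I get, by linearity of \eqref{fidelity}, that $f(\chi_{34}) = c_1^2 f(\chi_{34}^{(1)}) + c_2^2 f(\chi_{34}^{(2)}) + (\text{cross contribution})$, and I would first show that the diagonal part already converges to $1$ because $c_1^2 + c_2^2 = 1$ (using that $|\chi_{34}\rangle$ is normalized and $\langle\chi_{34}^{(2)}|\chi_{34}^{(1)}\rangle \to 0$, so $c_1^2 + c_2^2 = 1 - 2c_1 c_2\,\mathrm{Re}\langle\chi_{34}^{(1)}|\chi_{34}^{(2)}\rangle \to 1$), together with the hypothesis $f(\chi_{34}^{(j)}) \to 1$.

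The main work is then to bound the cross contribution. For the fidelity this is $\sum_{k\neq j} \frac{\max\{0,N+1-|k-j|\}}{(N+1)(N+2)} \, 2c_1 c_2\,\mathrm{Re}\big(x_k^{(1)}\bar x_j^{(2)}\big)$; since the weight $\max\{0,N+1-|k-j|\}/(N+1) \leq 1$ and is supported on $|k-j| \leq N$, I would dominate this (after adding back and subtracting the $k=j$ diagonal, as in the derivation of \eqref{fid.int}) by a Cauchy–Schwarz-type estimate in terms of overlaps of the form $\sum_k x_k^{(1)} \bar x_{k+m}^{(2)}$ for $|m| \leq N$. Each such shifted overlap tends to $\langle\chi_{34}^{(2)}|\chi_{34}^{(1)}\rangle$-like quantities that vanish as $\nu\to\infty$ when $N$ is fixed — here the non-negativity of the coefficients $x_k^{(1,2)} \geq 0$ is essential, since it lets me control $\big|\sum_k x_k^{(1)}\bar x_{k+m}^{(2)}\big|$ uniformly in $|m|\leq N$ by something comparable to $\langle\chi_{34}^{(1)}|\chi_{34}^{(2)}\rangle$ (a shift by a bounded amount $m$ cannot turn a small non-negative overlap into a large one). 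Hence the cross contribution is $o(1)$ and $f \to 1$; the statement $E \to \pi N/8$ then follows immediately from the triangle-type inequality \eqref{triangle} and the upper bound $E \leq \pi N/8$.

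For the refined quantitative claim — that when both $|\chi_{34}^{(j)}\rangle$ satisfy Proposition \ref{prop} with parameters $(\delta_j, \alpha_j)$, the superposition satisfies \eqref{fid.ent.asym} with $\alpha(\nu)$ replaced by $\max\{\alpha_1,\alpha_2\}$ — I would feed the continuum approximation $x_k^{(j)} \simeq \chi_j(z)\sqrt{2/\nu}$ with $\chi_j(z) = \sqrt{\alpha_j}\,\zeta_j((z+\delta_j)\alpha_j)$ into the estimate \eqref{estimate} applied to each of the four bilinear pieces. The diagonal pieces inherit the rates from Proposition \ref{prop} directly, the worst rate being governed by the larger of $\alpha_1, \alpha_2$ since the error terms there are $\mathcal{O}(\alpha_j N/\nu)$ and $o(\alpha_j)$ etc. For the cross pieces I would run the same change of variables $z'' = (z'+y')/2$, $y'' = z'-y'$ and Taylor expansion $\zeta_j(z''\pm y''/2) = \zeta_j(z'') + \tilde R_j$ used in the proof of Proposition \ref{prop}; the leading term of the cross integral is proportional to $\int \bar\zeta_1 \zeta_2 \to \langle\chi^{(1)}_{34}|\chi^{(2)}_{34}\rangle \to 0$, with subleading corrections again controlled by $\max\{\alpha_1,\alpha_2\}$ times the appropriate power depending on smoothness. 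The anticipated main obstacle is precisely this cross-term bound: one must show that the shifted/convolved overlaps that arise from the $|k-j| \leq N$ window still vanish as $\nu\to\infty$, and this is exactly where I expect to need the hypothesis of non-negative coefficients — without it, cancellations could in principle keep $\langle\chi^{(1)}|\chi^{(2)}\rangle$ small while a shifted overlap stays bounded away from zero, breaking the argument.
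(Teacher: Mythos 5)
Your decomposition of the fidelity into the two diagonal pieces plus a cross term, the use of the normalization $1=c_1^2+c_2^2+2c_1c_2\langle\chi_{34}^{(1)}|\chi_{34}^{(2)}\rangle\to c_1^2+c_2^2$ together with $f(\chi^{(1,2)}_{34})\to 1$, and the deduction $E\to\pi N/8$ from \eqref{triangle} all match the paper's first steps. The gap is in your treatment of the cross term, which you yourself flag as the main obstacle. The lemma you propose there --- that non-negativity of the coefficients lets you bound the shifted overlaps $\sum_k x^{(1)}_k x^{(2)}_{k+m}$, $|m|\le N$, by something comparable to the unshifted overlap $\langle\chi^{(1)}_{34}|\chi^{(2)}_{34}\rangle$ --- is false as stated: take $x^{(1)}$ uniform over the even Fock states and $x^{(2)}$ uniform over the odd ones; both are non-negative, the unshifted overlap is exactly zero, yet the shift-by-one overlap is close to one. (These comb states do not satisfy the remaining hypotheses, so they do not contradict the proposition, but they do invalidate the step ``a bounded shift cannot turn a small non-negative overlap into a large one''.) If you insist on a direct estimate, the hypothesis you actually need is not positivity but asymptotically perfect teleportation of each state separately: applying \eqref{fidelity} to $\chi^{(j)}_{34}$ forces every shifted self-overlap $\sum_k x^{(j)}_k x^{(j)}_{k+m}$ with $0<|m|\le N$ to tend to $1$, i.e. $\sum_k\big(x^{(j)}_k-x^{(j)}_{k+m}\big)^2\to 0$, and only then does the shifted cross-overlap differ from the vanishing unshifted one by $o(1)$.

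The paper avoids all of this with a one-line squeeze argument, and that is where positivity genuinely enters: with $c_{1,2}\ge 0$ and $x^{(1,2)}_k\ge 0$ the cross contribution to $f$ is manifestly non-negative, while $f\le 1$ by definition and the diagonal part already tends to $1$; hence the cross term must vanish and $f\to 1$, with $E\to\pi N/8$ from \eqref{triangle} as you say. The same positivity observation also settles the quantitative claim: since the cross term can only increase the performance, the corrections for $|\chi_{34}\rangle$ are at most of the order of the worse of the two constituents, which gives \eqref{fid.ent.asym} with $\max\{\alpha_1(\nu),\alpha_2(\nu)\}$ without redoing any continuum computation. Your plan to estimate the cross pieces by the method of proposition \ref{prop} would in any case be more delicate than you suggest, because $\zeta_1$ and $\zeta_2$ live on differently rescaled variables $(z+\delta_{1,2})\alpha_{1,2}$, so the leading cross integral is not simply $\int\bar\zeta_1\zeta_2$.
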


\begin{proof}
The teleportation fidelity of the state $|\chi_{34}\rangle$ is

\begin{equation}
f=\frac{1}{N+2}+\frac{\nu}{2}\sum_{k,j=0}^\nu \frac{\max\left\{0,N+1-|k-j|\right\}}{(N+1)(N+2)}\left(c_1^2 x_k^{(1)}x_j^{(1)}+c_2^2 x_k^{(2)}x_j^{(2)}+2c_1 c_2x_k^{(1)}x_j^{(2)}\right).
\end{equation}

\noindent
The first two contributions are the fidelities of the states $|\chi_{34}^{(1,2)}\rangle$, weighted with the coefficients $c_{1,2}^2$. The normalization of the state $|\chi_{34}\rangle$ implies

\begin{equation}
1=c_1^2+c_2^2+2c_1 c_2\langle\chi_{34}^{(1)}|\chi_{34}^{(2)}\rangle\xrightarrow[\nu\to\infty]{}c_1^2+c_2^2.
\end{equation}
Furthermore, the states $|\chi_{34}^{(1,2)}\rangle$ provide perfect teleportation in the same limit, and thus fidelity one. Putting all these considerations together, the asymptotic teleportation fidelity of $|\chi_{34}\rangle$ is

\begin{equation} \label{fid.sup}
\lim_{\nu\to\infty}f=\lim_{\nu\to\infty}\bigg(1+\nu c_1 c_2\sum_{k,j=0}^\nu \frac{\max\left\{0,N+1-|k-j|\right\}}{(N+1)(N+2)}x_k^{(1)}x_j^{(2)}\bigg).
\end{equation}

\noindent
The second term on the right hand side of (\ref{fid.sup}) is non-negative. Indeed, it vanishes in the limit $\nu\to\infty$, because the fidelity cannot exceed one, by definition. Therefore, we get $\lim_{\nu\to\infty}f=1$. Moreover, $\lim_{\nu\to\infty}E=\pi N/8$, because of the inequality \eqref{triangle}, and since the averaged final entanglement cannot exceed the value $\pi N/8$. The corrections to the asymptotic performances of $|\chi_{34}\rangle$ are at most of the same order of those of $|\chi_{34}^{(1,2)}\rangle$, because the superposition can only enhance the teleportation performances under the assumptions $c_{12},x_k^{(1,2)}\geq 0$.
\end{proof}

\begin{remark} \label{rem3}
\emph{The function $\alpha(\nu)$ carries information on the convergence towards perfect teleportation. In fact, if $N$ is fixed and $\nu\to\infty$, the difference between the actual teleportation performances and the asymptotic performances decreases with $\alpha(\nu)$.}
\end{remark}

Proposition \ref{prop} allows us to study the asymptotic performances of teleportation for several resource states. First, we check the consistency of proposition \ref{prop} with respect to some of the resource states discussed in \cite{Marzolino2015}: separable states, N00N states, and the maximally entangled state \eqref{max.ent.state}. Separable states do \emph{not} satisfy the continuity requirement for $\chi(z)$, since the values $x_k$ cannot be approximated with a continuous function. The same happens for superpositions of few Fock states, such as N00N states. The maximally entangled resource state (\ref{max.ent.state}) satisfies proposition \ref{prop} with $\alpha(\nu)=1$, $\delta(\nu)=0$ and $\chi(z)=1/\sqrt{2}$, providing asymptotically perfect teleportation. Indeed, we already derived this result from the analytical computations in \cite{Marzolino2015}. Propositions \ref{prop} and \ref{prop2} imply that teleportation endowed with the other resource states discussed in \cite{Marzolino2015} is asymptotically perfect, in accordance with the numerical computations presented there. It is instructive to give some examples of states which do not reach asymptotically perfect teleportation, and indeed do not satisfy the hypotheses of proposition \ref{prop}. Simple instances are the maximally entangled state (\ref{max.ent.state}) and the other resource states considered in \cite{Marzolino2015}, where additional phases $e^{i\vartheta(k)}$ multiply each Fock state in the superposition. The conditions of proposition \ref{prop} are not met if the phases scale differently from the moduli. Indeed, the teleportation performances computed numerically are far from their maximal values.

We now state some applications of proposition \ref{prop} to mixed resource states $\rho_{34}$ in (\ref{res}).

\begin{corollary} \label{cor1}
If there are two real functions $\delta(\nu)\in[-1,1]$ and $\alpha(\nu)$ such that the function $\omega(z,y)$ is rescaled as $\omega(z,y)=\alpha\xi((z+\delta)\alpha,(y+\delta)\alpha)$ and the function $\xi(\cdot,\cdot)$ does not depend on $\nu$, then the estimations (\ref{fid.ent.asym}) hold, depending on whether $\xi(\cdot,\cdot)$ is continuous, differentiable, or twice differentiable in $[\alpha(\delta-1),\alpha(1+\delta)]\times[\alpha(\delta-1),\alpha(1+\delta)]$.
\end{corollary}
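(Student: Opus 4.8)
The plan is to reduce Corollary \ref{cor1} to Proposition \ref{prop} by the same manipulations already used in the derivation of \eqref{fid.int}--\eqref{ent.int}. First I would start from the continuum-approximated fidelity \eqref{fid.int} and average final entanglement \eqref{ent.int}, written in terms of $\omega(z,y)$ (for the entanglement, $|\omega(z,y)|$). Substituting $\omega(z,y)=\alpha\,\xi\big((z+\delta)\alpha,(y+\delta)\alpha\big)$ and changing variables to $z'=(z+\delta)\alpha$, $y'=(y+\delta)\alpha$ transforms the double integral over $[-1,1]^2$ into an integral over $[\alpha(\delta-1),\alpha(1+\delta)]^2$, producing exactly the same structure as the integral \eqref{int.int} in the proof of Proposition \ref{prop}, with $\bar\zeta(z')\zeta(y')$ replaced by $\xi(z',y')$. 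The key observation is that the argument in the proof of Proposition \ref{prop} never really used the product form $\bar\zeta(z')\zeta(y')$: after the change to $z''=(z'+y')/2$, $y''=z'-y'$ and the Taylor expansion in $y''$ around $y''=0$, what matters is the regularity of the two-variable kernel along the diagonal, i.e. the behaviour of $\xi(z''+y''/2,z''-y''/2)$ as $y''\to 0$.

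The core steps, in order, are: (i) perform the rescaling and change of variables to bring \eqref{fid.int} into the form $\tfrac{1}{N+2}+\tfrac{1}{2\alpha(N+1)(N+2)}\,I$, where $I$ is the analogue of \eqref{int.int} with $\bar\zeta(\cdot)\zeta(\cdot)\to\xi(\cdot,\cdot)$; (ii) split $I$ into the bulk strip $|y''|\le 2\epsilon$, $z''\in[\alpha(\delta-1)+\epsilon,\alpha(1+\delta)-\epsilon]$ and the two boundary wedges, exactly as in \eqref{int.int}, with $\epsilon=\alpha(N+1)/\nu$; (iii) Taylor-expand $\xi(z''+y''/2,z''-y''/2)=\xi(z'',z'')+\tilde R$ in $y''$, use the elementary integral \eqref{integral} to carry out the $y''$-integration, and identify the leading term as $2(N+1)^2\int \xi(z'',z'')\,dz''=2(N+1)^2(1-\mathcal O(\epsilon))$, using the normalization $\int_{-1}^1\omega(z,z)\,dz=1$; (iv) bound the remainder $\tilde R$ according to the regularity of $\xi$ on the diagonal — $\tilde R=o(1)$ if $\xi$ is continuous, $\tilde R=\mathcal O(y'')$ if differentiable, $\tilde R=(\partial_1-\partial_2)\xi\cdot y''/2+\mathcal O(y''^2)$ if twice differentiable — giving $R=o(\alpha)$, $R=o(\alpha\epsilon)$, $R=\mathcal O(\alpha\epsilon^2)$ respectively; (v) repeat with $|\omega|$ in place of $\omega$ for the entanglement \eqref{ent.int}, noting as in the proof of Proposition \ref{prop} that $|\xi|$ inherits continuity from $\xi$, and loses differentiability only on the measure-zero set where $\xi$ vanishes transversally, so the estimates are unaffected. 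Alternatively, for (v) one can simply invoke \eqref{triangle} together with $E\le\pi N/8$, as done at the end of the proof of Proposition \ref{prop}.

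The main subtlety — and the only place where this differs from a verbatim copy of the previous proof — is step (iv): in Proposition \ref{prop} the diagonal regularity was automatic from the one-variable function $\zeta$, whereas here one must be careful that the hypothesis ``$\xi$ continuous (resp. differentiable, twice differentiable) on the square'' genuinely controls the restriction $y''\mapsto\xi(z''+y''/2,z''-y''/2)$ uniformly in $z''$. For continuity this is uniform continuity on the compact square; for differentiability the relevant quantity is the directional derivative $\tfrac12(\partial_1-\partial_2)\xi$, which is bounded on the compact square; for the second-order case one needs the Hessian bounded on the square. So each regularity hypothesis in the corollary statement exactly supplies what is needed, and no extra assumption is required. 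I expect no serious obstacle beyond keeping track of these uniformity statements; everything else is a transcription of the Proposition \ref{prop} computation with the product kernel replaced by the general kernel $\xi$.
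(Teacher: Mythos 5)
Your proposal is correct and follows essentially the same route as the paper, which proves Corollary \ref{cor1} by repeating the steps of Proposition \ref{prop} with the substitutions $\bar\chi(z)\chi(y)\to\omega(z,y)$ and $\bar\zeta(z')\zeta(y')\to\xi(z',y')$; your spelled-out change of variables, splitting into bulk and boundary regions, diagonal Taylor expansion with uniformity from compactness, and the treatment of $E$ via $|\xi|$ or via \eqref{triangle} are exactly the details the paper's terse proof implicitly relies on.
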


\begin{proof}
The proof follows the same steps as proposition \ref{prop}, with the substitution $\bar\chi(z)\chi(y)\to\omega(z,y)$ and, subsequently, $\bar\zeta(z)\zeta(y)\to\xi(z,y)$.
\end{proof}

We observe that the rescaling $z\to(z+\delta(\nu))\alpha(\nu)$ is crucial only for the evaluation of the integrals in the variable $y''=(z-y)\alpha(\nu)$. Moreover, if the resource state is factorized in the variables $z+y$ and $z-y$, i.e $\omega(z,y)=\omega_+(z+y)\omega_-(z-y)$, the double integrals in (\ref{fid.int},\ref{ent.int}) can be factorized into products of single integrals. Examples are Gaussian states. This brings us to the following corollary.

\begin{corollary} \label{cor2}
If there is a real function $\alpha(\nu)$ such that $\omega(z,y)=\omega_+(z+y)\omega_-((z-y)\alpha)$, and $\omega_-(\cdot)$ does not depend on $\nu$, then the estimations (\ref{fid.ent.asym}) hold, depending on whether $\omega_-(\cdot)$ is continuous, differentiable or twice differentiable in $[-2\alpha,2\alpha]$.
\end{corollary}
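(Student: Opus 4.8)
The plan is to substitute the factorized ansatz $\omega(z,y)=\omega_+(z+y)\,\omega_-((z-y)\alpha)$ into the integral representation (\ref{fid.int}) of the fidelity and to repeat the change of variables used in the proof of Proposition \ref{prop}, the new feature being that the double integral decouples into a product of two one-dimensional integrals. Concretely, I would pass to the variables $u=z+y$ and $v'=(z-y)\alpha$, with Jacobian $1/(2\alpha)$ and image the diamond $\{\,|u|\le 2,\ |v'|\le(2-|u|)\alpha\,\}$. Since the triangular weight equals $(N+1)\max\{0,\,1-|v'|/(2\epsilon)\}$ with $\epsilon=\alpha(N+1)/\nu$ and is supported on $|v'|\le 2\epsilon$, the $v'$-integration runs effectively over $[-2\epsilon,2\epsilon]$ and carries all the $\nu$-dependence, while the $u$-integration sees only $\omega_+$ on essentially $[-2,2]$. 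No rescaling or centring of $u$ is needed, because the counterpart of the mean imbalance \eqref{imbalance} is now absorbed into the shape of $\omega_+$, in the spirit of Remark \ref{rem1}; this is why no function $\delta(\nu)$ occurs in the statement.

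For the dominant term I would write $\omega_-(v')=\omega_-(0)+\tilde R(v')$, observe that any piece of $\tilde R$ odd in $v'$ vanishes against the even weight, and evaluate $\int_{-2\epsilon}^{2\epsilon}(N+1)\big(1-|v'|/(2\epsilon)\big)\,dv'=2(N+1)\epsilon$ by the elementary identity (\ref{integral}). Multiplying by the prefactor $\nu/(2\alpha)$ and by $\int_{-2}^2\omega_+(u)\,du$, and using the normalization $1=\int_{-1}^1\omega(z,z)\,dz=\tfrac12\,\omega_-(0)\int_{-2}^2\omega_+(u)\,du$, the product collapses to $2(N+1)^2$, whence $f=1-R-{\cal O}(\alpha N/\nu)$ after insertion into (\ref{fid.int}). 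The remainder $R$ collects the contribution of $\tilde R$ and displays the same threefold behaviour as in Proposition \ref{prop}: if $\omega_-$ is merely continuous on $[-2\alpha,2\alpha]$ then $\tilde R=o(1)$ uniformly on $|v'|\le 2\epsilon$; if $\omega_-$ is differentiable there, the part of $\tilde R$ left after the parity cancellation is $o(v')$, hence $o(\epsilon)$ on $|v'|\le 2\epsilon$; if $\omega_-$ is twice differentiable it is ${\cal O}(v'^2)$, hence ${\cal O}(\epsilon^2)$. With $\epsilon=\alpha(N+1)/\nu$ these three cases reproduce exactly the three forms of $R$ in (\ref{fid.ent.asym}).

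Two approximations remain to be controlled, and this is the only genuinely delicate step: the clipping of the $v'$-range at the corners of the diamond, where $(2-|u|)\alpha<2\epsilon$, and the extension of the $u$-integral to the full $[-2,2]$ when $\int_{\mathrm{diamond}}$ is replaced by the factorized product $\big(\int_{-2}^2\omega_+\big)\big(\int_{-2\epsilon}^{2\epsilon}(\cdots)\,\omega_-\big)$. Both discrepancies are carried by the strip where $2-|u|$ is ${\cal O}(N/\nu)$; bounding the integrand there by $(N+1)\,\|\omega_+\|_\infty\|\omega_-\|_\infty$ and keeping track of the prefactor $\nu/(2\alpha)$ and of the division by $(N+1)(N+2)$, one obtains a contribution of the order of the ${\cal O}(\alpha N/\nu)$ term already present in (\ref{fid.ent.asym}); everything else is a direct specialization of Proposition \ref{prop} in which the two integrations decouple. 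Finally, the claim for $E$ follows, exactly as in Proposition \ref{prop}, by feeding the fidelity estimate into the triangle inequality \eqref{triangle} together with $E\le\pi N/8$; alternatively one repeats the computation with $\omega_+,\omega_-$ replaced by $|\omega_+|,|\omega_-|$, noting that the isolated zeros of $\omega_-$ at which $|\omega_-|$ ceases to be differentiable contribute with measure zero to the $v'$-integral.
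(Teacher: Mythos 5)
Your proposal is correct and follows essentially the same route as the paper: the change of variables to $z+y$ and $(z-y)\alpha$ (the paper uses $z'=(z+y)/2$, $y'=(z-y)\alpha$, a cosmetic difference), the factorization into a $\nu$-independent $\omega_+$ integral times a short-range $\omega_-$ integral controlled by the expansion of $\omega_-$ around zero with the normalization $\int_{-1}^1\omega_+(2z)\omega_-(0)\,dz=1$, the separate bound on the corner region of width ${\cal O}(\epsilon)$, and the entanglement estimate obtained from inequality \eqref{triangle} together with $E\le\pi N/8$. The explicit parity-cancellation remark is a small extra refinement but does not change the argument.
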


\begin{proof}
The computation of the fidelity is similar to that performed in proposition \ref{prop}. With the change of variables from $(z,y)$ to $(z',y')=((z+y)/2,(z-y)\alpha)$, the integral in equation (\ref{fid.int}) becomes

\begin{align} \label{int.int.fact}
& \nu\int_{-1}^1 dz\int_{-1}^1 dy \max\left\{0,N+1-|z-y|\frac{\nu}{2}\right\}\omega(z,y) \nonumber \\
& =\frac{\nu}{\alpha(\nu)}\int_{\epsilon-1}^{1-\epsilon} dz'\omega_+(2z')\int_{-2\epsilon}^{2\epsilon} dy' \left(N+1-|y'|\frac{\nu}{2\alpha}\right)\omega_-(y') \nonumber \\
& +\frac{\nu}{\alpha(\nu)}\int\limits_{{1-\epsilon<|z''|<1}}dz' \omega_+(2z')\int_{2|z'|-2}^{2-2|z'|} dy' \left(N+1-|y'|\frac{\nu}{2\alpha}\right)\omega_-(y'),
\end{align}

\noindent
where $\epsilon=\alpha(\nu)(N+1)/\nu$. Then, all the estimations proceed as in proposition \ref{prop}, with the difference that we only need to apply the continuity or differentiability conditions to the function $\omega_-(\cdot)$, and use the normalization

\begin{equation}
\int_{-1}^1 dz \, \omega_+(2z)\omega_-(0)=\int_{-1}^1 dz \, \omega(z,z)=1.
\end{equation}
The computation of the average final entanglement follows from the inequality \eqref{triangle}, and from the fact that the maximum value of $E$ is $\pi N/8$.
\end{proof}

Since the teleportation protocol is linear in the resource state, we can use the previous propositions and corollaries to study the asymptotic teleportation performances of more general mixed resource states.

\begin{corollary} \label{cor3}
If a resource state is a mixture of states satisfying the hypotheses of propositions \ref{prop}, \ref{prop2} or of corollaries \ref{cor1}, \ref{cor2}, namely $\rho_{34}=\sum_i p_i\rho_{34}^{(i)}$, and if there are real functions $\delta_i(\nu)\in[-1,1]$ and $\alpha_i(\nu)$ as in (\ref{fid.ent.asym}) for each $\rho_{34}^{(i)}$, its asymptotic teleportation performances satisfy the equations (\ref{fid.ent.asym}), with $\max_i\alpha_i(\nu)$ instead of $\alpha(\nu)$.
\end{corollary}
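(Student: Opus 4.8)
The plan is to lean on the single structural fact already emphasised in the text: the teleportation protocol is linear in the resource state, and in particular the fidelity formula \eqref{fidelity} is an \emph{affine} function of the entries $(\rho_{34})_{k,j}$. Writing $\rho_{34}=\sum_i p_i\rho_{34}^{(i)}$ with $p_i\geq 0$, $\sum_i p_i=1$, one has $(\rho_{34})_{k,j}=\sum_i p_i(\rho_{34}^{(i)})_{k,j}$, so that the constant offset $\frac{2}{N+2}$ in \eqref{fidelity} is reproduced because the $p_i$ sum to one, and the double sum over $k\neq j$ distributes over the mixture. Hence $f=\sum_i p_i f_i$, where $f_i$ denotes the fidelity obtained by feeding $\rho_{34}^{(i)}$ alone into \eqref{fidelity}. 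I would state this one-line decomposition first.

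Next I would invoke, for each index $i$, whichever of Proposition \ref{prop}, Proposition \ref{prop2}, Corollary \ref{cor1} or Corollary \ref{cor2} applies to $\rho_{34}^{(i)}$: by hypothesis there are $\delta_i(\nu)\in[-1,1]$ and $\alpha_i(\nu)$ with $f_i=1-R_i-{\cal O}\!\left(\alpha_i(\nu)N/\nu\right)$, the order of $R_i$ being fixed by the continuity / differentiability / twice-differentiability of the corresponding profile function ($\zeta_i$, $\xi_i$, or $\omega_-^{(i)}$). Substituting into $f=\sum_i p_i f_i$, bounding $\alpha_i(\nu)\leq\max_j\alpha_j(\nu)=:\alpha_{\max}(\nu)$ termwise, and using $\sum_i p_i=1$ once more, I obtain $f=1-\sum_i p_i R_i-{\cal O}\!\left(\alpha_{\max}(\nu)N/\nu\right)$, which is exactly the first estimate in \eqref{fid.ent.asym} with $\alpha_{\max}(\nu)$ in place of $\alpha(\nu)$; the effective remainder $\sum_i p_i R_i$ carries the slowest decay among the $R_i$, so the mixture is governed by its least regular component.

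For the average final entanglement $E$ I would recompute nothing. Exactly as in the proofs of Propositions \ref{prop} and \ref{prop2}, the inequality \eqref{triangle}, $\frac{8E}{\pi}\geq(N+2)f-2$, together with the upper bound $E\leq\pi N/8$, sandwiches $E$ and promotes the fidelity estimate just obtained to the second equation in \eqref{fid.ent.asym}. (If a direct argument is preferred, convexity of the absolute value gives $E\leq\sum_i p_i E_i$ in \eqref{av.ent}, consistent with and slightly sharper than the bound the triangle-inequality route produces.)

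I do not expect a genuine obstacle: the content is essentially bookkeeping on top of the already-established propositions and corollaries. The only point deserving a word of care is the implicit finiteness — or at least uniform summability — of the mixture: the steps "${\cal O}(\alpha_i N/\nu)\leq{\cal O}(\alpha_{\max}N/\nu)$ uniformly in $i$" and the interchange of $\lim_{\nu\to\infty}$ with $\sum_i$ are automatic for a finite convex combination, so I would state the corollary for finitely many components (equivalently, assume the implied constants in the component estimates are uniformly bounded in $i$), which covers all the applications of interest.
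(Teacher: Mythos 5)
Your proposal is correct and follows essentially the same route as the paper: the fidelity estimate comes from the linearity (convexity) of $f$ in the resource state applied to the component estimates with $\max_i\alpha_i(\nu)$, and the entanglement estimate is promoted from the fidelity via the inequality \eqref{triangle} together with the bound $E\leq\pi N/8$. Your remark about restricting to finitely many components (or uniformly bounded implied constants) is a reasonable precaution left implicit in the paper, but it does not change the argument.
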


\begin{proof}
The computation of the fidelity is directly implied by the linearity of the fidelity with respect to the resource state. The computation of the \eqref{triangle} and the fact that the maximum value of $E$ is $\pi N/8$.
\end{proof}

\subsection{Application: ground state of the double well potential}

The previous propositions allow us to study the asymptotic teleportation performances of resource states, even if the latter are not explicitly known. In this perspective, we now apply the previous propositions to analyse the asymptotic teleportation performances of the ground state of the two-mode Bose-Hubbard Hamiltonian with two-body interactions, as a resource state in the teleportation protocol. From the physical point of view, this resource state can be prepared with nowadays' technologies, such as magnetic traps and evaporative cooling \cite{Thomas2002}. The Bose-Hubbard Hamiltonian reads

\begin{equation} \label{BH}
H=-\tau\left(a_3^\dag a_4+a_4^\dag a_3\right)+U\left(n_3(n_3-1)+n_4(n_4-1)\right),
\end{equation}

\noindent
where $n_{3,4}=a^\dag_{3,4}a_{3,4}$, $\tau$ is the tunnelling amplitude between the wells of a double-well potential, and $U$ is the on-site interparticle interaction strength.

The ground state of (\ref{BH}) was studied in the limit of large particle numbers in \cite{Buonsante2012}. If

\begin{equation}
-1+\nu^{-\frac{2}{3}}\ll\gamma\equiv\frac{\nu U}{\tau}\ll\nu^2,
\end{equation}
the continuum approximation \eqref{chi34} of the ground state is a Gaussian superposition centred in $z=0$ with variance $\sigma_{\gamma}^2$:

\begin{equation} \label{cont.approx.1gauss}
\chi(z)=\frac{e^{-\frac{z^2}{(4\sigma_{\gamma}^2)}}}{(2\pi\sigma_{\gamma}^2)^{\frac{1}{4}}}, \qquad \sigma_{\gamma}^2=\frac{1}{\nu\sqrt{\gamma+1}}.
\end{equation}
Proposition \ref{prop} with $\alpha(\nu)=\nu^{1/2}$ and $\delta(\nu)=0$ implies that this Gaussian ground state provides asymptotically perfect teleportation when employed as a resource state.

If

\begin{equation}
-\sqrt{\nu}\ll\gamma\ll-1-\nu^{-\frac{2}{3}},
\end{equation}
the ground state is the superposition of two Gaussians:

\begin{equation} \label{cont.approx.2gauss}
\chi(z)=\frac{\chi_{-z_0}(z)+\chi_{z_0}(z)}{\sqrt{2}},
\end{equation}
with

\begin{equation}
\chi_{\pm z_0}(z)=\frac{e^{-\frac{(z\pm z_0)^2}{(4\sigma_{\gamma}'^2)}}}{(2\pi\sigma_{\gamma}'^2)^{\frac{1}{4}}}, \qquad \sigma_{\gamma}'^2=\frac{1}{(\nu|\gamma|\sqrt{\gamma^2-1})},
\end{equation}
and $z_0=\sqrt{1-1/\gamma^2}$. This ground state is the superposition of two Gaussian states each of which provides asymptotically perfect teleportation, as follows from proposition \ref{prop} with $\alpha(\nu)=\nu^{1/2}$ and $\delta(\nu)=\pm\sqrt{1-1/\gamma^2}$. These two states become orthogonal states when $\nu\to\infty$. Thus, the ground state satisfies the hypothesis of proposition \ref{prop2} and provides asymptotically perfect teleportation. An illustration of the two regimes is sketched in figure \ref{gaussians}.

\begin{figure}[htbp]
\centering
\label{fid.sym.coh.a}
\includegraphics[width=0.7\columnwidth]{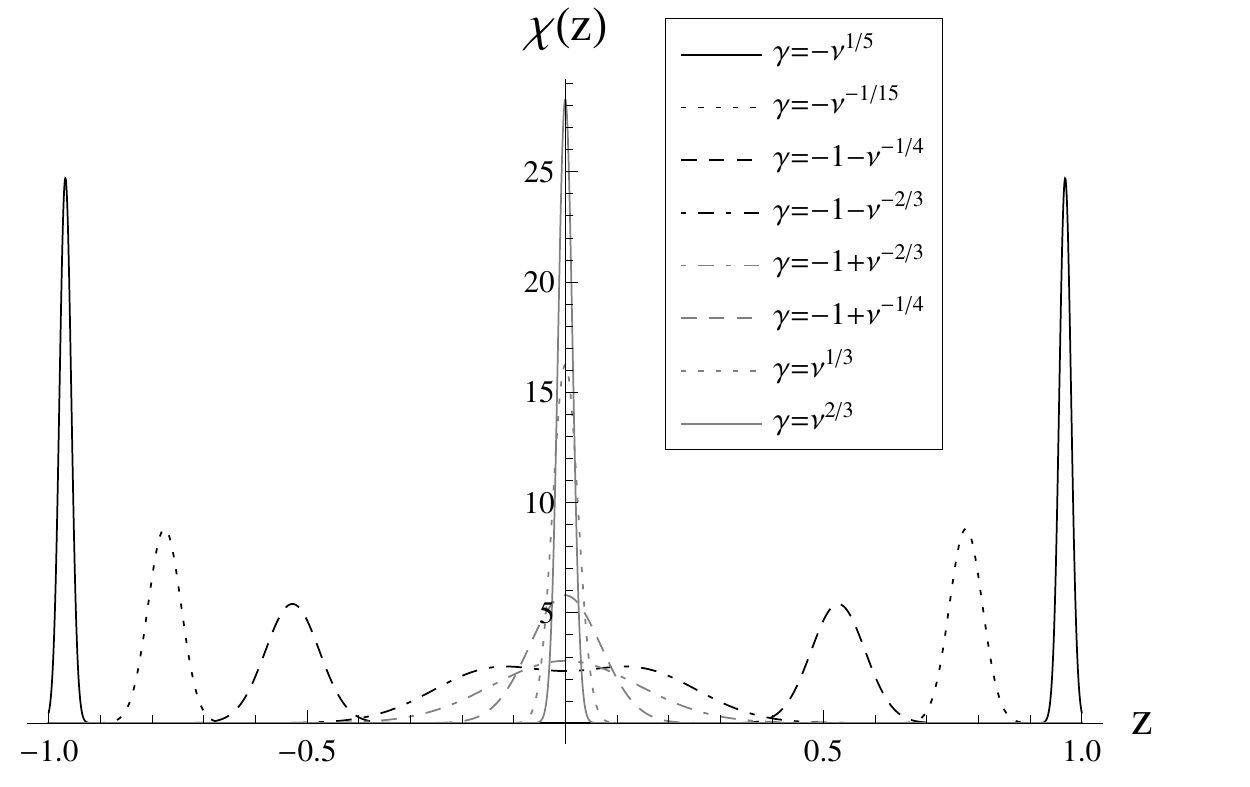}
\caption{Sketch of the continuum approximation $\chi(z)$, equations \eqref{cont.approx.1gauss} and \eqref{cont.approx.2gauss}, of the ground state of (\ref{BH}) in the regimes $-1+\nu^{-2/3}\ll\gamma\ll\nu^2$ (Gaussians) and $-\sqrt{\nu}\ll\gamma\ll-1-\nu^{-2/3}$ (superpositions of two Gaussians) for $\nu=1000$ and different values of $\gamma$: $\gamma=-\nu^{-1/5}$ (solid, black), $\gamma=-\nu^{-1/15}$ (dotted, black), $\gamma=-1-\nu^{-1/4}$ (dashed, black), $\gamma=-1-\nu^{-2/3}$ (dotdashed, black), $\gamma=-1+\nu^{-2/3}$ (dotdashed, gray), $\gamma=-1+\nu^{-1/4}$ (dashed, gray), $\gamma=\nu^{1/3}$ (dotted, gray), $\gamma=\nu^{2/3}$ (solid, gray).}
\label{gaussians}
\end{figure}

In the intermediate regime

\begin{equation}
-1-\nu^{-\frac{2}{3}}\ll\gamma\ll-1+\nu^{-\frac{2}{3}},
\end{equation}
the continuum approximation of the ground state $\chi(z)$ is picked around $z=0$, but it is not a Gaussian because it starts to feel the separation into two Gaussians, and its analytical expression is not known \cite{Buonsante2012}. Nevertheless, the ground state satisfies the hypothesis of proposition \ref{prop} with $\alpha(\nu)=\nu^{1/3}$ and $\delta(\nu)=0$ \cite{Buonsante2012}, thus provides asymptotically perfect teleportation. This is an application of proposition \ref{prop} that determines asymptotically perfect teleportation performances, though we do not explicitly know the resource state. According to propositions \ref{prop},\ref{prop2} and to the interpretation of $\alpha(\nu)$ given in remark \ref{rem3}, the teleportation performances of the ground state in the intermediate regime converge to the maximum values faster than the teleportation performances of other ground states, but slower than the teleportation performances of the maximally entangled state (\ref{max.ent.state}).

\subsection{Example: Gaussian states}

If the specific form of the resource state is known, one can directly compute the asymptotic teleportation performances (\ref{fid.int},\ref{ent.int}) and derive the exact errors from the asymptotic behaviour, as shown in the following example.

Let us consider a resource state whose continuum approximation \eqref{cont.appr.mixed} is

\begin{equation} \label{res.ex}
\omega(z,y)=\omega_+(z+y)e^{-\alpha^2(\nu)(z-y)^2},
\end{equation}
where the normalization reads $\int_{-1}^1 dz \, \omega_+(2z)\omega_-(0)=\int_{-1}^1 dz \, \omega(z,z)=1$. We follow the same steps as in the proof of proposition \ref{prop}, without the rescaling $z\to(z+\delta(\nu))\alpha(\nu)$. The expansion of the function $\omega$ becomes

\begin{equation}
\omega\left(z+\frac{y}{2},z-\frac{y}{2}\right)=\omega(z,z)\left(\sum_{j=0}^\infty \frac{y^{2j}\alpha^{2j}(\nu)}{j!}\right).
\end{equation}
Plugging this expansion into the equation of the fidelity and exploiting equation \eqref{integral} for $a\geq b\geq 0$ and $j\geq 0$, we estimate the fidelity as done in proposition \ref{prop}:

\begin{equation}
f=1-{\cal O}\left(\alpha^2(\nu)\frac{N^2}{\nu^2}\right)-{\cal O}\left((\omega(1,1)+\omega(-1,-1))\frac{N}{\nu}\right),
\end{equation}
where the second and the third term come from the first and the second double integral in the right hand side of (\ref{int.int}) respectively. The fidelity goes to one if and only if $\alpha^2(\nu)N^2/\nu^2\to 0$ and $\textnormal{and} \quad \big(\omega(1,1)+\omega(-1,-1)\big)N/\nu\to 0$. The inequality \eqref{triangle} implies the same conditions for the asymptotic average final entanglement.

The previous example recovers the maximally entangled state (\ref{max.ent.state}), the ground state of the Hamiltonian (\ref{BH}) in the Gaussian regime, and more general pure Gaussian states with

\begin{equation}
x_k=\frac{e^{-\frac{(k-k_0)^2}{4\sigma^2}}}{\sum_{k=0}^\nu e^{-\frac{(k-k_0)^2}{2\sigma^2}}}, \qquad \sigma\sim\nu^\beta, \qquad \beta>0.
\end{equation}
In this latter case, $\alpha(\nu)\sim\nu^{1-\beta}$ and $\omega(1,1)=\omega(-1,-1)\sim e^{-\nu^{2-2\beta}}$. The convergence is faster than that of the maximally entangled state (\ref{max.ent.state}) if

\begin{equation}
0=\lim_{\nu\to\infty}(1-f)\frac{\nu}{N}=e^{-\nu^{2-2\beta}}+N \nu^{1-2\beta},
\end{equation}
namely $1/2<\beta<1$. If $\beta\geq 1$,

\begin{equation}
\lim_{\nu\to\infty}(1-f)\frac{\nu}{N}=e^{-\nu^{2-2\beta}}+N \nu^{1-2\beta}=1
\end{equation}
and the convergence rate is the same as for the maximally entangled state. Strictly speaking, only the maximally entangled state can provide a probabilistic perfect teleportation, for finite $\nu$, while a Gaussian state always introduces a distortion of the teleported state.

\section{Robustness of the resource state and of the teleportation performances} \label{robustness}

As a further application of the previous properties, we discuss performances of the teleportation protocol, when the resource state $\rho_{34}$ is affected by noise. The noise is typically generated by dissipative dynamics due to the interaction with the environment, like a thermal bath or a lossy channel. We treat two different ways to model the noise.

\subsection{Mixing channel}

The first model consists in mixing the resource state with an undesired state

\begin{equation} \label{mixture}
\tilde\rho_{34}=\frac{\rho_{34}+s \sigma_{34}}{1+s},
\end{equation}
where the state is left unchanged with probability $1/(1+s)$ and is transformed into $\sigma_{34}$ with probability $s/(1+s)$. This description of the noise applies for instance when the Krauss operators of the noisy time-evolution \cite{NielsenChuang} are known, and the contribution which does not change the state can be singled out. Given the teleportation protocols ${\cal T}_{\rho,\sigma}$ and the fidelities $f_{\rho,\sigma}$ provided, respectively, by the resource states $\rho_{34}$ and $\sigma_{34}$, the average teleported state and the teleportation fidelity of the overall mixture $\tilde\rho_{34}$ are

\begin{equation}
\mathcal{T}_{\tilde\rho}\big[|\psi_{12}\rangle\langle\psi_{12}|\big]=\frac{\mathcal{T}_{\rho}\big[|\psi_{12}\rangle\langle\psi_{12}|\big]+s\mathcal{T}_{\sigma}\big[|\psi_{12}\rangle\langle\psi_{12}|\big]}{1+s}, \qquad f_{\tilde\rho}=\frac{f_{\rho}+sf_{\sigma}}{1+s}
\label{mix.fid}
\end{equation}

If the original resource state $\rho_{34}$ outperforms separable resource states $f_{\rho}>f_\textnormal{sep}$, and the state $\sigma_{34}$ is separable, $f_{\sigma}=f_\textnormal{sep}$, then $f_{\tilde\rho}>f_\textnormal{sep}$ for all finite $s$. Recall that from the inequality \eqref{triangle} with the right-hand-side being equal to $(N+2)(f-f_\textnormal{sep})$, $f_{\tilde\rho}>f_\textnormal{sep}$ implies that the resource state outperforms separable states also with respect to the average final entanglement $E_{\tilde\rho}>0$. This is a special feature of teleportation with identical particles. Indeed, for every state $\rho$ of distinguishable particles there is a separable state $\sigma$ and a finite mixing parameter $\tilde s$, such that the mixture $(\rho+s\sigma)/(1+s)$ with $s\geq\tilde s$ is separable \cite{Vidal1999}. Therefore, the overall mixture cannot outperform separable states as a resource for the teleportation protocol. On the other hand, a complete erasure of the entanglement via mixtures with separable states is never possible for two-mode states of identical particles \cite{Benatti2012-2}, and the residual entanglement, as well as entanglement of the original state, are useful for teleportation.

A different situation occurs when the undesired state $\sigma$ is entangled. In this case, the complete erasure of the entanglement of $\rho_{34}$ via the mixture $\tilde\rho_{34}$ is possible for identical particles, as well as for distinguishable particles \cite{Steiner2003}. However, the complete erasure occurs under very precise conditions on $s \, \sigma_{34}$: its off-diagonal entries in the Fock basis must erase the off-diagonal entries of $\rho_{34}$ in the same basis \cite{Benatti2012-2}, and therefore entanglement can be regenerated by small perturbations. These properties are reflected in the average final entanglement of the teleported state. It can happen that even if entanglement is not completely erased, the resulting mixture $\tilde\rho_{34}$ does not outperform the teleportation fidelity of separable states, since the fidelities $f_{\tilde\rho,\rho,\sigma}$ can be smaller than $f_\textnormal{sep}$. However, the linearity of the teleportation protocol implies that if both the fidelities $f_\rho,f_\sigma$ improve over $f_\textnormal{sep}$, then $f_{\tilde\rho}$ also does. Furthermore, $f_{\tilde\rho}>f_\textnormal{sep}$ implies $E_{\tilde\rho}>0$, due to \eqref{triangle}. If both $\rho_{34}$ and $\sigma_{34}$ provide asymptotically perfect teleportation, the same happens to $\tilde\rho_{34}$, as a consequence of corollary \ref{cor3}.

\subsection{Master equations}

The second description of noisy dynamics consists of master equations which generate time-evolutions of the system \cite{BreuerPetruccione,BenattiFloreanini}. The two most relevant sources of noise in the context of ultracold atoms are dephasing and particle losses \cite{Leibfried2003}. Entanglement affected by these dynamics feels an exponential damping, which goes to zero only asymptotically in time \cite{Argentieri2011,Marzolino2013}. Therefore, these systems do not experience finite-time disentanglement which is a generic behaviour of distinguishable particles affected by local noisy dynamics \cite{Zyczkowski2001,Carvalho2004,Fine2005}. This feature affects the performances of teleportation when the resource state (\ref{res}) undergoes noisy dynamics.

\subsubsection{Dephasing}

Let's start with the dephasing described by the following Markovian master equation

\begin{equation} \label{deph}
\frac{d}{dt}\rho_{34}(t)=\sum_{i=3,4} \lambda_i\left(a_i^\dag a_i\rho_{34}(t) a_i^\dag a_i-\frac{1}{2}\left\{(a_i^\dag a_i)^2,\rho_{34}(t)\right\}\right),
\end{equation}

\noindent
with positive constants $\lambda_i$. The solution \cite{Argentieri2011} reads

\begin{equation} \label{sol.deph}
(\rho_{34})_{k,j}(t)=e^{-\frac{t}{2}(\lambda_3+\lambda_4)(k-j)^2}(\rho_{34})_{k,j}(0),
\end{equation}

\noindent
see also \cite{Benatti2012,Marzolino2013} for generalizations to many modes. A resource state affected by dephasing can lose its capability to outperform separable states. Let us consider the initial state

\begin{equation} \label{lose}
\rho_{34}(0)=
\begin{pmatrix}
\begin{matrix}
a & 0 & 0 & y \\
0 & b & x & 0 \\
0 & x & c & 0 \\
y & 0 & 0 & d
\end{matrix} & {\bf 0}_{4,\nu-3} \\
{\bf 0}_{\nu-3,4} & {\bf 0}_{\nu-3,\nu-3}
\end{pmatrix}
\end{equation}

\noindent
in the Fock basis $\{|k,\nu-k\rangle\}_k$, where $x<0$, $y>0$, and ${\bf 0}_{n,m}$ is the $n\times m$ matrix with all zero entries. The normalization reads $1=\textnormal{tr}(\rho_{34}(0))=a+b+c+d$, and the positivity of the state implies $x^2\leq bc$ and $y^2\leq ad$. If $N>2$, the fidelity (\ref{fidelity}) is larger than the one of separable states if and only if $y>-x \, \frac{N}{N-2}$. On the other hand, the evolved state (\ref{sol.deph}) outperforms the fidelity of separable states if and only if $y>-x \, e^{4t(\lambda_3+\lambda_4)}\frac{N}{N-2}$. Therefore, there are values of $y$ such that the initial state does better than the teleportation fidelity provided by separable states, but the evolved state at finite times does not. However, the resource states whose entries $(\rho_{34})_{k,j}(0)$ with $|k-j|<N+1$ are non-negative preserve their capability to outperform separable states at any finite time, when they are affected by dephasing. This class of states includes the states discussed in \cite{Marzolino2015}. Indeed, the positivity of the entries $(\rho_{34})_{k,j}(t)$ with $|k-j|<N+1$ ensures that the additional contribution to the fidelity (\ref{fidelity}) with respect to separable states is positive. Moreover, these off-diagonal entries are exponentially damped, but vanish only asymptotically in time. If the resource state at time zero is entangled, it remains entangled at any finite time \cite{Argentieri2011,Benatti2012,Marzolino2013}. Thus, the average final entanglement is strictly positive at any finite time if it is for the initial resource state, since the expression (\ref{av.ent}) only involves the modulus of the entries of the resource state.

Another question is whether the noisy resource state preserves the property to achieve asymptotically perfect teleportation in the limit $\nu\to\infty$. If the initial state is (\ref{res.ex}), the evolved state has the same form, with the substitution $\alpha^2(\nu)\to\alpha^2(\nu)+t(\lambda_3+\lambda_4)\nu^2/8$. The initial resource state provides asymptotically perfect teleportation, if and only if $\displaystyle \lim_{\nu\to\infty}\alpha^2(\nu)N^2/\nu^2=0$ and $\displaystyle \lim_{\nu\to\infty}\big(\omega(1,1)+\omega(-1,-1)\big)N/\nu=0$, while the evolved resource state does the same if and only if $\displaystyle \lim_{\nu\to\infty}t(\lambda_3+\lambda_4)N^2=0$.

\subsubsection{Particle loss}

Another relevant noise source that affects systems of ultracold atoms is particle loss, and can be described by the following general Markovian master equation

\begin{equation} \label{loss}
\frac{d}{dt}\rho_{34}(t)=\sum_i \lambda_i\left(A_i\rho_{34}(t) A_i^\dag-\frac{1}{2}\left\{A_i^\dag A_i,\rho_{34}(t)\right\}\right),
\end{equation}

\noindent
where the $\lambda_i$ are positive constants, and the $A_i$ are monomials of the annihilation operators $a_3$, $a_4$: $A_i=a_3^{m_i}a_4^{n_i}$. These assumptions generalize the two- and $m$-particle losses $A_i\in\{a_3 a_4,a_3^m,a_4^m\}$ discussed in \cite{Li2009,Pawlowski2010,Kepesidis2012}. The solution of (\ref{loss}) reads \cite{Marzolino2013}

\begin{equation} \label{sol.loss}
\rho_{34}(t)=e^{-t\sum_i\frac{\lambda_j}{2}A_i^\dag A_i}\rho_{34}(0)e^{-t\sum_i\frac{\lambda_i}{2}A_i^\dag A_i}+\sum_{\nu'=0}^{\nu-1}\rho_{34}^{(\nu')}(t),
\end{equation}

\noindent
where $\rho_{34}^{(\nu')}(t)$ are matrices which operate on the subspace of $\nu'<\nu$ particles. Each operator $A_i^\dag A_i$ is diagonal in the Fock basis, thus the solution (\ref{sol.loss}) has the explicit form

\begin{equation} \label{sol.loss2}
(\rho_{34})_{k,j}(t)=e^{-t(\eta_k+\eta_j)}(\rho_{34})_{k,j}(0)+\sum_{\nu'=0}^{\nu-1}(\rho_{34}^{(\nu')})_{k,j}(t), \qquad
\eta_k=\sum_i\frac{\lambda_i}{2}\frac{k!(\nu-k)!}{(k-m_i)!(\nu-k-n_i)!},
\end{equation}

\noindent
where we adopted the convention that $(k-m_i)!=\infty$ if $k-m_i<0$, and similarly for $(\nu-k-n_i)!$. We notice that the components $\rho_{34}^{(\nu'<\nu)}(t)$ of the noisy resource state do not contribute to the fidelity. Indeed, the fidelity is the overlap between the state to be teleported and the average teleported state. The state which is teleported by means of the component $\rho_{34}^{(\nu'<\nu)}(t)$ has $N-\nu+\nu'$ particles, and thus has vanishing overlap with the initial state which has $N$ particles. However, these components contribute to the average final entanglement, if $\nu-\nu'<N$. Since only the component of the mixture (\ref{sol.loss}) with $\nu$ particles contributes to the fidelity, the fidelity is damped by its weight $\sum_k e^{-2t\eta_k}(\rho_{34})_{k,k}(0)$. The fidelity of the evolved resource state is

\begin{equation}
f_{\rho_t}=\tilde f\sum_k e^{-2t\eta_k}(\rho_{34})_{k,k}(0),
\end{equation}
where $\tilde f$ is the fidelity provided by the normalized state

\begin{equation}
\frac{e^{-t\sum_i\frac{\lambda_j}{2}A_i^\dag A_i}\rho_{34}(0)e^{-t\sum_i\frac{\lambda_i}{2}A_i^\dag A_i}}{\sum_k e^{-2t\eta_k}(\rho_{34})_{k,k}(0)}.
\end{equation}
In particular, estimating all the rates $\eta_k\leq\max_k\eta_k$, and due to the linearity of the fidelity, one gets the inequality

\begin{equation}
f_{\rho_t}\geq e^{-2t\max_k\eta_k}f_{\rho_0},
\end{equation}
where $f_{\rho_0}$ is the fidelity provided by the resource state at time zero. Then, if

\begin{equation}
2 \, t \, \max_k\eta_k<\ln\left(f_{\rho_0}\frac{N+2}{2}\right),
\end{equation}
the evolved resource state preserves the capability to outperform the separable states. Furthermore, the off-diagonal entries and, hence, the entanglement of the resource state, are exponentially damped, but vanish only asymptotically in time \cite{Argentieri2011,Benatti2012,Marzolino2013}. Thus, the average final entanglement is positive at any finite time, since the expression (\ref{av.ent}) involves only the modulus of the entries of the resource state.

Let us now investigate whether the noisy resource state preserves the property to achieve asymptotically perfect teleportation as $\nu\to\infty$. The presence of non-negligible $\rho_{34}^{(\nu'<\nu)}(t)$ prevents asymptotically perfect teleportation, because these matrix elements do not contribute to the fidelity. Thus, a necessary condition for the asymptotically perfect teleportation is the vanishing of $\rho_{34}^{(\nu'<\nu)}(t)$ when $\nu\to\infty$. These contributions are weighted with the probability

\begin{equation}
\sum_{\nu'<\nu}\textnormal{tr}(\rho_{34}^{(\nu')}(t))=1-\sum_{k=0}^\nu e^{-2t\eta_k}(\rho_{34})_{k,k}(0),
\end{equation}
where we used the property that the state $\rho_{34}(t)$ has trace one. Since the initial state is positive and normalized, $\sum_{k=0}^\nu(\rho_{34})_{k,k}(0)=1$, the weight $\sum_{\nu'<\nu}\textnormal{tr}(\rho_{34}^{(\nu')}(t))$ goes to zero if and only if time and the decay rates scale with $\nu$ such that $\lim_{\nu\to\infty}t\eta_k=0$, for all $k$. As a concrete example, we consider two-particle losses: $\{\lambda_j A_j\}_j=\{\lambda_3 a_3,\lambda_4 a_4,\lambda_{33}a_3^2,\lambda_{44}a_4^2,\lambda_{34}a_3 a_4\}$. The decay rates in (\ref{sol.loss}) are

\begin{align}
\eta_k+\eta_j= & \lambda_{44}\nu(\nu-1)-\lambda_4\nu+\frac{k+j}{2}(\lambda_3-\lambda_4-\lambda_{33}+\lambda_{44}(1-2\nu)+\nu\lambda_{34}) \nonumber \\
& +\left(\frac{(k+j)^2}{4}+\frac{(k-j)^2}{4}\right)(\lambda_{33}+\lambda_{44}-\lambda_{34}).
\end{align}

\noindent
The exponential damping $e^{-t(\eta_k+\eta_j)}$ in (\ref{sol.loss2}) factorizes in a product of an exponential function depending only on $k+j$, and the exponential $e^{-t(\lambda_{33}+\lambda_{44}-\lambda_{34})(k-j)^2/4}$. Thus, for the initial state (\ref{res.ex}), the evolved state has the same form with the substitution $\alpha^2(\nu)\to\alpha^2(\nu)+t(\lambda_{33}+\lambda_{44}-\lambda_{34})\nu^2/16$. Moreover, given an initial state that provides asymptotically perfect teleportation, namely $\displaystyle \lim_{\nu\to\infty}\alpha^2(\nu)N^2/\nu^2=0$ and $\displaystyle \lim_{\nu\to\infty}\big(\omega(1,1)+\omega(-1,-1)\big)N/\nu=0$, the above necessary conditions $\displaystyle \lim_{\nu\to\infty}t\eta_k=0$ to preserve the asymptotic teleportation performances turn out to be sufficient as well.

\section{Conclusions} \label{discussions}

We extended the generalization of quantum teleportation to identical massive particles. We first reviewed how the notion of entanglement and the teleportation protocol change in the presence of identical particles. Since particles cannot be individually addressed, we have to identify subsystems with subalgebras of observables that can be experimentally manipulated. Therefore, local parties do not own particles but rather orthogonal modes, such as in optical lattices with wells which can be split into groups. The aim of the teleportation protocol is to send the state of one mode to a mode owned by a receiver, by means of local operations, classical communication, and the aid of an entangled shared state. If the mode to be teleported is entangled with another mode, teleportation is also called entanglement swapping. One can divide a very long distance into segments, such that the noise within each segment is controllable. Swapping the entanglement across each segment, it is possible to share entanglement at distances along which the noise is not directly controllable. Thus, entanglement swapping can be used to share long-distance entanglement, as required for quantum networks \cite{Kimble2008}, without the need to physically gather the subsystems in the same place.

Perfect teleportation with identical particles is possible only when the number of particles in the resource state tends to infinity. Therefore, we derived sufficient conditions for the resource state to provide asymptotically perfect teleportation, in the above limit. These results generalize the examples explicitly discussed in \cite{Marzolino2015}. Moreover, they can be used to establish asymptotic teleportation performances even with only partial knowledge of the resource state. This situation was exemplified with the ground states of the double-well potential with two-body interactions, which can be prepared with available techniques, i.e. magnetic traps and evaporative cooling \cite{Thomas2002}.

Furthermore, we studied the robustness of teleportation performances against noise, in connection with the robustness of entanglement of the resource state. In order to model the noise, we considered the mixture of the resource state with undesired states, and the open system dynamics describing dephasing and particle losses. The capability to outperform separable states is preserved for any mixture with separable states and with almost all entangled states. The same holds true when some resource states undergo the above dissipative dynamics for any finite interval of time. The property to achieve asymptotically perfect teleportation is more fragile against noise.

\end{document}